\documentclass[submission,copyright,creativecommons]{eptcs}

\usepackage{iftex}

\ifpdf
  \usepackage{underscore}         
  \usepackage[T1]{fontenc}        
\else
  \usepackage{breakurl}           
\fi

\usepackage{amssymb}
\usepackage{amsthm}
\usepackage{stmaryrd}
\usepackage{mathtools}
\usepackage{xspace}
\usepackage{mathpartir}
\usepackage{tikz}
\usepackage{hyperref}

\title{Capturing Bisimulation-Invariant Exponential-Time Complexity Classes}
\author{Florian Bruse
\institute{University of Kassel\\ Kassel, Germany}
\email{florian.bruse@uni-kassel.de}
\and
David Kronenberger 
\institute{University of Kassel \\ Kassel, Germany}
\and Martin Lange
\institute{University of Kassel\\ Kassel, Germany}
\email{martin.lange@uni-kassel.de}
}

\begin{document}
\maketitle


\theoremstyle{plain}
\newtheorem{theorem}{Theorem}
\newtheorem{lemma}[theorem]{Lemma}
\newtheorem{proposition}[theorem]{Proposition}

\theoremstyle{definition}
\newtheorem{definition}[theorem]{Definition}
\newtheorem{example}[theorem]{Example}

\theoremstyle{remark}
\newtheorem{remark}[theorem]{Remark}
\newtheorem{observation}[theorem]{Observation}

\renewcommand{\epsilon}{\varepsilon}

\newcommand{\Nat}{\ensuremath{\mathbb{N}}}
\newcommand{\PHFL}{\text{PHFL}\xspace}
\newcommand{\HFL}{\text{HFL}\xspace}
\newcommand{\phfl}[2]{\PHFL^{#1}_{#2}}
\newcommand{\hfl}[1]{\HFL^{#1}}
\newcommand{\HOLFP}{\holfp{}}
\newcommand{\holfp}[1]{\ensuremath{\textsf{HO}^{#1}\textsf{(LFP)}}\xspace}
\newcommand{\mucalc}{\ensuremath{\mathcal{L}_\mu}\xspace}
\newcommand{\polymucalc}{\ensuremath{\mucalc^\omega}\xspace}

\newcommand{\bisim}[1]{#1\ensuremath{\!\!{/}\!_{\sim}}}

\newcommand{\ExpTime}{\textsf{EXPTIME}\xspace}
\newcommand{\ExpSpace}{\textsf{EXPSPACE}\xspace}
\newcommand{\PTime}{\textsf{P}\xspace}
\newcommand{\PSpace}{\textsf{PSPACE}\xspace}
\newcommand{\NPTime}{\textsf{NP}\xspace}
\newcommand{\NLogSpace}{\textsf{NLOGSPACE}\xspace}

\newcommand{\Transsys}{\ensuremath{{T}}}
\newcommand{\States}{\ensuremath{{S}}}
\newcommand{\Prop}{\ensuremath{\mathbf{P}}}
\newcommand{\Actions}{\ensuremath{\mathbf{A}}}
\newcommand{\Label}{\ensuremath{\ell}}
\newcommand{\state}{\ensuremath{s}}
\newcommand{\statealt}{\ensuremath{t}}

\def\newarrow#1{\mathop{{\hbox{\setbox0=\hbox{$\scriptstyle{#1\quad}$}{$%
\mathrel{\mathop{\setbox1=\hbox to
\wd0{\rightarrowfill}\ht1=3pt\dp1=-2pt\box1}\limits^{#1}}%
$}}}}}

\def\newarrowi#1#2{\mathop{{\hbox{\setbox0=\hbox{$\scriptstyle{#1\quad}$}{$%
\mathrel{\mathop{\setbox1=\hbox to
\wd0{\rightarrowfill}\ht1=3pt\dp1=-2pt\box1}\limits^{#1}}_{#2}%
$}}}}}

\newcommand{\Transition}[3]{\ensuremath{#1 \newarrow{#2} #3}}
\newcommand{\Transitioni}[4]{\ensuremath{#1 \newarrowi{#2}{#4} #3}}

\newcommand{\indtype}{\odot}
\newcommand{\ORDHO}{\mathit{order}}
\newcommand{\ordho}[1]{\ORDHFL(#1)}
\newcommand{\hovars}{\ensuremath{\mathbf{V}}}
\newcommand{\lfp}{\ensuremath{\mathit{lfp}}}
\newcommand{\gfp}{\ensuremath{\mathit{gfp}}}
\newcommand{\LFP}{\ensuremath{\mathit{LFP}}}

\newcommand{\grtype}{\bullet}
\newcommand{\varpos}{\ensuremath{+}}
\newcommand{\varneg}{\ensuremath{-}}
\newcommand{\varboth}{\ensuremath{0}}

\newcommand{\dimension}{\ensuremath{d}}
\newcommand{\ORDHFL}{\mathit{ord}}
\newcommand{\ordhfl}[1]{\ORDHFL(#1)}

\newcommand{\sem}[3]{\ensuremath{\llbracket #1 \rrbracket^{#2}_{#3}}}
\newcommand{\semm}[2]{\sem{#1}{#2}{}}

\newcommand{\mytrue}{\ensuremath{\mathtt{t\!t}}}
\newcommand{\myfalse}{\ensuremath{\mathtt{f\!f}}}
\newcommand{\lvars}{\ensuremath{\mathbf{L}}}
\newcommand{\fvars}{\ensuremath{\mathbf{F}}}
\newcommand{\mudiam}[2]{\langle #1 \rangle #2}
\newcommand{\mubox}[2]{[#1] #2}
\newcommand{\subst}[2]{\{#1\}#2}

\newcommand{\trule}[2]{\inferrule{#1}{#2}}
\newcommand{\judg}[2]{#1\vdash #2}

\newcommand{\nxt}{\mathit{next}}

\newcommand{\SEMTRANS}{\mathit{tptr}}
\newcommand{\semtrans}[3]{\SEMTRANS^{#2}_{#3}(#1)}
\newcommand{\SYNTRANS}{\mathit{trans}}
\newcommand{\syntrans}[1]{\SYNTRANS(#1)}

\newcommand{\TODOcomment}[2]{%
  \stepcounter{TODOcounter#1}%
  {\scriptsize\bf$^{(\arabic{TODOcounter#1})}$}%
  \marginpar[\fbox{
    \parbox{2cm}{\raggedleft
      \scriptsize$^{({\bf{\arabic{TODOcounter#1}{#1}}})}$%
      \scriptsize #2}}]%
  {\fbox{\parbox{2cm}{\raggedright 
      \scriptsize$^{({\bf{\arabic{TODOcounter#1}{#1}}})}$%
      \scriptsize #2}}}
}%

\newcounter{TODOcounter}
\newcommand{\TODO}[1]{\TODOcomment{}{#1}}


\allowdisplaybreaks

\begin{abstract}
Otto's Theorem characterises the bisimulation-invariant PTIME queries over graphs as exactly
those that can be formulated in the polyadic $\mu$-calculus, hinging on the Immerman-Vardi Theorem
which characterises PTIME (over ordered structures) by First-Order Logic with least fixpoints. 
This connection has been extended to characterise bisimulation-invariant EXPTIME by an extension 
of the polyadic $\mu$-calculus with functions on predicates, making use of Immerman's characterisation
of EXPTIME by Second-Order Logic with least fixpoints. 

In this paper we show that the bisimulation-invariant versions of all classes in the exponential 
time hierarchy have logical counterparts which arise as extensions of the polyadic $\mu$-calculus 
by higher-order functions. This makes use of the characterisation of $k$-EXPTIME by Higher-Order
Logic (of order $k+1$) with least fixpoints, due to Freire and Martins.
\end{abstract}

\section{Introduction}

Descriptive complexity theory aims at characterising complexity classes -- usually defined via computational
resources like time or space consumption -- by means of \emph{logical} resources. Its central notion is that
of a complexity class $\mathcal{C}$ being \emph{captured} by a logic $\mathcal{L}$ in the sense that the
properties which can be checked in complexity $\mathcal{C}$ are exactly those that can be defined in $\mathcal{L}$.
This provides a characterisation of computational complexity that is independent of a machine model. Instead,
computational difficulty is characterised by the need for particular logical resources like $k$-order quantifiers
or fixpoints of a particular type. It is widely believed that this provides a more promising line of attack for
notoriously difficult problems of separating complexity classes; at least it makes machinery that is traditionally
used for measuring the expressive power of logics available for such tasks.

Here we adopt terminology from database theory as this is traditionally close to descriptive complexity, and speak 
of \emph{queries} being \emph{answered} instead of problems being solved or languages being decided, different names 
for the same thing.

Ever since Fagin's seminal work showing that the complexity class \NPTime is captured by Existential Second-Order Logic
\cite{Fagin74}, descriptive complexity has provided logical characterisations of many standard complexity classes. 
The Abiteboul-Vianu Theorem for instance states that \PSpace is captured by First-Order Logic with Partial Fixpoint
Operators \cite{Abiteboul87a}. The proofs of these results rely on the existence of a total order on the structure
at hand. This is not a restriction for characterisations of complexity classes including and above NP as the resources
available there are sufficient to construct such an order. 

There is no known way to define or construct such a total order in deterministic polynomial time which is a major 
obstacle for capturing the complexity class \PTime. There is some belief that such a logic should exist, possibly 
in the form of first-order logic with additional operators like fixpoints, counting and others, cf.\ \cite{DBLP:conf/lics/Grohe08}.
This is grounded in the characterisation of the complexity class of \emph{Ordered Polynomial Time} -- i.e.\ those 
queries that can be answered in polynomial time on structures that are equipped with a total order -- by First-Order 
Logic with Least Fixpoints \cite{DBLP:conf/stoc/Vardi82,DBLP:journals/iandc/Immerman86}.

An interesting result was then found by Otto who considered another restriction of the class \PTime, namely that
of \emph{bisimulation-invariant} queries (on graphs, naturally). He showed that this class, denoted \bisim{\PTime}
is captured by the polyadic $\mu$-calculus \polymucalc\ \cite{Otto99}, a generalisation of the well-known modal 
$\mu$-calculus to interpretations of formulas not in states but in tuples of states of fixed arity \cite{AndersenPMC:1994}. 
This is particularly interesting as it shifts the borderline at which the availability of an order becomes critical, 
from between \PTime and \NPTime to between \bisim{\PTime} and \bisim{\NLogSpace}. 

This opens up the question of further capturing results of bisimulation-invariant complexity classes by (modal)
logics. Indeed, characterisations have been found for \bisim{\ExpTime} and \bisim{\PSpace} in terms an extension of
\polymucalc by first-order functions from predicates to predicates \cite{conf/ifipTCS/LangeL14}. The logic capturing 
\bisim{\ExpTime} is coined $\phfl{1}{}$ -- Polyadic Higher-Order Fixpoint Logic of order 1. The higher-order extension is borrowed from \HFL 
which extends the modal $\mu$-calculus with a simply typed $\lambda$-calculus \cite{DBLP:conf/concur/ViswanathanV04}.
A syntactical restriction called \emph{tail-recursiveness} \cite{conf/rp/BruseLL17} has been identified that captures 
\bisim{\PSpace} \cite{conf/ifipTCS/LangeL14}, and when applying this restriction to \polymucalc or, likewise, $\phfl{0}{}$, 
one captures \bisim{\NLogSpace} (with the help of a particular partial order only, though) \cite{conf/ifipTCS/LangeL14}.

Considering bisimulation-invariant complexity classes above \bisim{\NPTime} does not serve the same purpose as it does
for smaller ones as one of the key motivations for moving to the bisimulation-invariant world is to avoid the need for
a total order. On the other hand, for two complexity classes $\mathcal{C}, \mathcal{C}'$ that have complete and
bisimulation-invariant problems we have $\mathcal{C} \ne \mathcal{C}'$ iff $\bisim{\mathcal{C}} \ne \bisim{\mathcal{C}'}$.
Most of the standard complexity classes posses such problems, for example (1-letter) NFA universality for \NPTime, resp.\
\PSpace, unbounded tree automaton intersection for \ExpTime, etc. Moreover, separating bisimulation-invariant classes
may be easier due to their close connection to modal logics where separating their expressiveness is routinely done (not
for the relatively complex higher-order modal fixpoint logics mentioned here, though).

In this paper we extend the characterisation of bisimulation-invariant time complexity classes to all the levels of the
exponential time hierarchy. We show that \bisim{$k$-\ExpTime} is captured by $\phfl{k}{}$, the polyadic version of the
higher-order extension of the modal $\mu$-calculus with functions up to type order $k$. We remark that a similar 
characterisation of the space complexity classes $k$-\ExpSpace is also possible \cite{KronenbergerMSc19} but needs to 
be omitted for lack of space and is therefore left for a future publication.

The paper is organised as follows. In Sect.~\ref{sec:prel} we recall the necessary preliminaries, mainly about the logics
studied here. In Sect.~\ref{sec:upper} we provide the easy half of the capturing result by putting together known results
and constructions which witness that any $\phfl{k}{}$ query can be answered in $k$-fold exponential time. In 
Sect.~\ref{sec:quantifier} we prepare for the more difficult and other half of the capturing result. We rely on a logical
characterisation of $k$-\ExpTime in terms of Higher-Order Predicate Logic with Fixpoints \cite{FREIRE201171}, and a 
key step in expressing such (bisimulation-invariant) queries in Higher-Order Modal Fixpoint is the modelling of 
higher-order quantification using an enumeration technique. In Sect.~\ref{sec:lower} we put this to use for showing
that any \bisim{$k$-\ExpTime}-query is definable in $\phfl{k}{}$. In Sect.~\ref{sec:concl} we conclude with remarks on
further work etc.

\section{Preliminaries}
\label{sec:prel}

Let $\Prop$ and $\Actions$ be finite sets of \emph{propositions}, resp.\ \emph{actions}. 
A labelled transition system (LTS) is a tuple $\Transsys = (\States, \{\Transition{}{a}{}\}_{a\in\Actions},\Label)$
where $\States$ is a nonempty set of \emph{states},  $\Transition{}{a}{} \subseteq \States  \times \States$ 
is a \emph{transition relation} for each action $a \in \Actions$, and $\Label\colon\States \to 2^{\Prop}$ labels the states
with those propositions that are true in them. 
We write $\Transition{\state}{a}{\statealt}$ instead of $(\state, \statealt) \in \Transition{}{a}{}$.

For $d \ge 1$, a $d$-\emph{pointed LTS} is a  pair $\Transsys, (\state_1,\dotsc,\state_\dimension)$ of an LTS and a $d$-tuple 
of states in it. It is also simply called \emph{pointed} when $d$ is clear from the context. Note that in the case of $d=1$,
this coincides with the usual notion of a pointed LTS. Given some tuple $\overline{\state} = (\state_1,\dotsc,\state_\dimension)$
and $i \leq \dimension$, we write $\overline{\state}[\statealt/i]$ to denote the tuple 
$(\state_1,\dotsc,\state_{i-1},\statealt,\state_{i+1},\dotsc,\state_\dimension)$.

\subsection{Polyadic Higher-Order Fixpoint Logic}

We assume familiarity with the modal $\mu$-calculus \mucalc \cite{Kozen83}. Polyadic Higher-Order Fixpoint Logic \PHFL 
\cite{conf/ifipTCS/LangeL14} extends \mucalc in several ways: (\textsc{i}) \HFL \cite{DBLP:conf/concur/ViswanathanV04} adds 
to it a simply typed $\lambda$-calculus; (\textsc{ii}) the polyadic $\mucalc$,  \polymucalc \cite{AndersenPMC:1994,Otto99} 
is obtained by lifting the interpretation of formulas in states to tuples of states of fixed arity. Now \PHFL merges both 
extensions. Its introduction requires a few technicalities.

\paragraph{Types.}

Types are used to govern the syntax of \PHFL, especially those parts that denote functions. They are derived from the grammar
\[
\tau \Coloneqq \grtype \mid \tau^v \to \tau
\]
where $\grtype$ is a ground type for propositions, $v \in \{\varpos, \varneg, \varboth\}$ are \emph{variances} denoting 
whether a function is monotonically increasing, monotonically decreasing, or constant in its argument. Variances are 
only needed to check well-typedness; we often do not display them for the sake of readability. 

The \emph{order} $\ORDHFL$ of a type is defined via $\ordhfl{\grtype} = 0$ and 
$\ordhfl{\tau_1 \to \tau_2} = \max(\ordhfl{\tau_1}+1, \ordhfl{\tau_2})$.
Types associate to the right whence every type is of the form $\tau_1 \to \dotsb \to \tau_n \to \grtype$.

Given some LTS $\Transsys = (\States, \{\Transition{}{a}{}\}_{a\in\Actions},\Label)$ and some $0 < \dimension \in \Nat$,
the semantics $\semm{\tau}{\Transsys}$ of a type $\tau$ is a complete lattice, defined inductively as follows:
\begin{align*}
\semm{\grtype}{\Transsys} &= (2^{\States^{\dimension}}, \subseteq) &
\semm{\tau_2^v \to \tau_1}{\Transsys} &= (\semm{\tau_2}{\Transsys} \to \semm{\tau_1}{\Transsys}, \sqsubseteq_{\tau_2^v \to \tau_1})
\end{align*}
where we tacitly identify a lattice with its domain and where the order $\sqsubseteq_{\tau_2^v \to \tau_1}$ is given by 
\[ f \sqsubseteq_{\tau_2^v \to \tau_1} g \quad \Leftrightarrow \quad \text{for all } x \in \semm{\tau_2}{\Transsys} \text{ we have } 
\left\{ \begin{aligned} 
f(x) \sqsubseteq_{\tau_1} g(x) &, \text{ if } v = \varpos \\
g(x) \sqsubseteq_{\tau_1} f(x) &, \text{ if } v = \varneg \\
f(x) = g(x) &, \text{ if }  v = \varboth.
 \end{aligned}\right.\]
Hence, the semantics of $\grtype$ is the powerset lattice over $\States^{\dimension}$ and $\tau_2^v \to \tau_1$
is the lattice of all monotonically increasing, monotonically decreasing or constant functions 
from $\semm{\tau_2}{\Transsys}$ to $\semm{\tau_1}{\Transsys}$,
depending on $v$. Such a set together with the above point-wise order forms a complete lattice if
$\semm{\tau_1}{\Transsys}$ is one. Hence,
monotone functions in such a lattice always have a least and greatest fixpoint due to the 
Knaster-Tarski-Theorem.
We write $\bigsqcup_{\tau_2^v \to \tau_1}$ to denote the join operator in this lattice.

Note that, technically, $\semm{\tau_1}{\Transsys}$ is also parameterised in $\dimension$. However,
since $\dimension$ is usually clear from context, we do not display it to avoid clutter.

\paragraph{Syntax.}

Let $\Prop$ and $\Actions$ be as above and let $\fvars$ and $\lvars$ be finite sets of \emph{fixpoint variables}, resp.\ 
\emph{lambda variables}. We use upper case letters $X, Y, \dots$ for the former and lower case letters $x, y, f, g$ for the latter.

Let $d \ge 1$. By $[d]$ we denote the set $1,\ldots,d$. The set of -- potentially non-well-formed -- formulas of the $\dimension$-adic fragment of
\PHFL, called $\phfl{}{\dimension}$, is derived via
\[
\varphi \Coloneqq p_i \mid \varphi \vee \varphi \mid \neg \varphi \mid \mudiam{a_i}{\varphi} \mid \subst{\sigma}{\varphi} 
\mid \lambda(x \colon \tau).\ \varphi \mid x \mid (\varphi\, \varphi) \mid \mu (X\colon \tau).\, \varphi \mid X
\]
where $x \in \lvars$, $X \in \fvars$, $p \in \Prop$, $a \in \Actions$, $1 \leq i \leq d$ and 
$\sigma\colon[d] \to [d]$ is a mapping on so-called \emph{indices}. We assume that standard connectives
such as $\myfalse$, $\wedge, \mubox{a}{}$ and $\nu (X\colon\tau)$ are available via the obvious dualities if needed. We will also
allow ourselves to use more convenient notation for the definition of (more complex) functions and their applications.
For instance, $(\ldots((\varphi\,\psi_1)\,\psi_2)\ldots)\, \psi_n$ is simply written as 
$\varphi(\psi_1,\ldots,\psi_n)$, and $\lambda (x_1\colon\tau_1).\lambda (x_2\colon\tau_2).\ldots \lambda (x_n\colon\tau_n).\psi$
is written as $\lambda (x_1\colon \tau_1,\ldots,x_n\colon\tau_n).\psi$, or even $\lambda(x_1,\dotsc,x_n\colon\tau).\psi$ 
if $\tau = \tau_i$ for all $1 \leq i \leq n$.

The notions of free and bound variables in a formula are as usual, with $\lambda (x\colon\tau).\varphi$ binding $x$
and with $\mu (X\colon\tau).\ \varphi$ binding $X$. 

Over an LTS $\Transsys$, a $\phfl{}{d}$ formula intuitively defines a set of $d$-tuples in $\Transsys$, or a function that 
transforms such a set into such a set, or into a function etc., depending on the formula's type which will be explained shortly. 
Before that we briefly introduce the intuitive meaning of the operators in the syntax. The first five listed in the grammar
above all define a subset of $\States^d$. For example, $p_i$ denotes all tuples such that $p$ holds at their $i$th state.
The operator $\subst{\sigma}{\varphi}$ rearranges the positions in tuples in such a subset. The modality $\mudiam{a_i}\varphi$
expresses that $\varphi$ holds on a tuple after replacing the $i$th state in it by some $a$-successor. A formula
of the form $\lambda(x\colon\tau^v).\, \varphi$ defines a function that consumes an argument of type $\tau$ and
is monotonically increasing, monotonically decreasing, or constant in this argument, depending on $v$. A formula of the form $(\varphi\, \psi)$ 
denotes the application of the semantics of $\varphi$ to the object defined by $\psi$. Finally, fixpoints
can now also define higher-order functions.

Obviously, not all formulas that can be derived from the above grammar can be given a semantics in a meaningful way;  
consider e.g.\ $(p_1\,p_2)$, Moreover, as is typical in a situation involving least and greatest fixpoints, 
the use of negation has to be restricted, cf.\ the
example $\mu (X\colon\grtype).\ \neg X$. Hence, \PHFL has a type system to filter out formulas that cannot be endowed with
a proper semantics.

\begin{figure}[t]
\begin{mathpar}

\trule{ }{\judg{\Gamma}{p_i\colon\grtype}}

\trule{\judg{\Gamma}{\varphi_1\colon\grtype}\\\judg{\Gamma}{\varphi_2\colon\grtype}}
{\judg{\Gamma}{\varphi_1 \vee \varphi_2\colon\grtype}}

\trule{\judg{{}\overline\Gamma}{\varphi\colon\grtype}}{\judg{\Gamma}{\neg\varphi\colon:\grtype}}

\trule{\judg{\Gamma}{\varphi:\grtype}}{\judg{\Gamma}{\mudiam{a_i}{\varphi}\colon\grtype}}

\trule{\judg{\Gamma}{\varphi:\grtype}}{\judg{\Gamma}{\subst{\sigma}{\varphi}\colon\grtype}}

\trule{\judg{\Gamma,x^v\colon\tau_1}{\varphi\colon\tau_2}}
{\judg{\Gamma}{\lambda (x^v:\tau_1).\ \varphi \colon\tau_1^v\to\tau_2}}

\trule{v\in\{\varpos,\varboth\}}{\judg{\Gamma\;,\;x^v\colon \tau}{x\colon\tau}}

\trule{\judg{\Gamma,X^{\varpos}\colon\tau}{\varphi\colon\tau}}
{\judg{\Gamma}{\mu (X\colon\tau).\ \varphi\colon\tau}}

\trule{ }{\judg{\Gamma\;,\;X^\varpos\colon\tau}{X\colon\tau}}

\trule{\judg{\Gamma}{\varphi_1:\tau_2^{\varpos}\to\tau_1}\\\judg{\Gamma}{\varphi_2:\tau_2}}
{\judg{\Gamma}{(\varphi_1\ \varphi_2) \colon\tau_1}}

\trule{\judg{\Gamma}{\varphi_1:\tau_2^{\varneg}\to\tau_1}\\\judg{\overline{\Gamma}}{\varphi_2:\tau_2}}
{\judg{\Gamma}{(\varphi_1\ \varphi_2) \colon\tau_1}}

\trule{\judg{\Gamma}{\varphi_1:\tau_2^{\varboth}\to\tau_1}\\
\judg{\Gamma}{\varphi_2\colon\tau_2}\\\judg{\overline\Gamma}{\varphi_2\colon\tau_2}
}
{\judg{\Gamma}{\varphi_1\ \varphi_2 \colon\tau_1}}
\end{mathpar}
\vspace*{-6mm}
\caption{The \PHFL~typing system.} 
\label{fig:typing-rules}
\end{figure}

A finite sequence $\Gamma$ of \emph{hypotheses} of the form $X^{v}\colon\tau$ or $x^v\colon\tau$ in which each variable occurs 
at most once is called a \emph{context}. The dual context $\overline{\Gamma}$ is obtained 
from $\Gamma$ by replacing all the hypotheses of the form $X^\varpos\colon\tau$ by $X^\varneg\colon\tau$ and vice versa, 
and doing the same for lambda variables. We say that $\varphi$ has type $\tau$ in the context $\Gamma$ if the statement 
$\judg{\Gamma}{\varphi\colon\tau}$ can be derived from the rules in Fig.~\ref{fig:typing-rules}.   
A formula without free variables is \emph{well-typed} if the statement $\judg{\emptyset}{\varphi\colon\grtype}$ 
can be derived from these rules. We tacitly assume that each fixpoint variable and each lambda variable is bound at most 
once in a well-typed formula, and that no variable occurs both freely and bound in a formula. Hence, each variable has a 
unique type in the context of a given, well-formed formula. If the type information is clear from
context or not important, we drop it from binders, simply writing $\lambda x.\, \varphi$ and $\mu X.\, \varphi$
for better readability.

A formula is said to be of order $k$ if the maximal order of the type of any subformula in $\varphi$
is $k$. By $\phfl{k}{d}$ we denote the set of well-typed formulas in $\phfl{}{d}$ that are of order $k$. 
Note that $\PHFL$ does indeed constitute an extension of other known formalisms, namely
\begin{itemize}
\item $\phfl{k}{1}$ is the same as $\hfl{k}$ for any $k \ge 0$ and, thus $\phfl{}{1} = \HFL$, and in particular
\item $\phfl{0}{1}$ is just the modal $\mu$-calculus \mucalc while
\item $\phfl{0}{d}$ is the $d$-dimensional polyadic $\mu$-calculus.
\end{itemize}

\paragraph{Semantics.}

Let $\Transsys = (\States, \{\Transition{}{a}{}\}_{a\in\Actions},\Label)$ be an LTS and let $\varphi$
be a well-typed formula.
An \emph{environment} is a function $\eta$ that assigns to each fixpoint variable and each lambda variable
of type $\tau$ an element of $\semm{\tau}{\Transsys}$.

Let $d \ge 1$. The semantics $\sem{\varphi\colon\tau}{\Transsys}{\eta}$ of a $\phfl{}{\dimension}$ formula $\varphi$ of 
type $\tau$, relative to an LTS $\Transsys$ and an environment $\eta$, is an object of $\semm{\tau}{\Transsys}$,
defined recursively as 
follows.
\begin{align*}
\sem{\judg{\Gamma}{p_i \colon \grtype}}{\Transsys}{\eta} & = \{(\state_1,\dotsc,\state_{\dimension} \in \States^\dimension \mid p \in \Label(\state_i)\} \\
\sem{\judg{\Gamma}{\varphi_1 \vee \varphi_2 \colon \grtype}}{\Transsys}{\eta} & = \sem{\judg{\Gamma}{\varphi_1 \colon \grtype}}{\Transsys}{\eta} \cup
\sem{\judg{\Gamma}{\varphi_2 \colon \grtype}}{\Transsys}{\eta} \\
\sem{\judg{\Gamma}{\neg \varphi \colon \grtype}}{\Transsys}{\eta} & = \States^\dimension \setminus \sem{\judg{\overline{\Gamma}}{\varphi\colon\grtype}}{\Transsys}{\eta} \\
\begin{split}\sem{\judg{\Gamma}{\mudiam{a_i}{\varphi}\colon \grtype}}{\Transsys}{\eta} & = \{(\state_1,\dotsc,\state_\dimension) \in \States^\dimension \mid \text{ ex. } \statealt\text{ s.t. } \\ & \qquad \qquad \qquad (\state_1,\dotsc,\state_{i_1},\statealt,\state_{i+1}, \dotsc, \state_\dimension) \in \sem{\judg{\Gamma}{\varphi \colon \grtype}}{\Transsys}{\eta} \text{ and } \Transition{\state_i}{a}{\statealt}\} \end{split}\\
\sem{\judg{\Gamma}{\subst{\sigma}{\varphi}\colon \grtype}}{\Transsys}{\eta} & = \{(\state_1,\dotsc,\state_\dimension) \in \States^\dimension \mid(\state_{\sigma(1)},\dotsc,\state_{\sigma(\dimension)}) \in \sem{\judg{\Gamma}{\varphi \colon \grtype}}{\Transsys}{\eta}\} \\
\sem{\judg{\Gamma}{\lambda (x^v \colon \tau_2).\ \varphi \colon \tau_2^v \to \tau_1}}{\Transsys}{\eta} &= f \in \semm{\tau_2^v \to \tau_1}{\Transsys} \text{ s.t.~f.a. } y \in \semm{\tau_2}{\Transsys}.\ f(y) 
= \sem{\judg{\Gamma, x^v \colon \tau_2}{\varphi \colon\tau_1}}{\Transsys}{\eta[x \mapsto y]} 
\\
\sem{\judg{\Gamma}{x \colon \tau}}{\Transsys}{\eta} &= \eta(x) \\
\sem{\judg{\Gamma}{(\varphi_1\,\varphi_2) \colon \tau_1}}{\Transsys}{\eta} &= \sem{\judg{\Gamma}{\varphi_1 \colon \tau_2^v \to \tau_1}}{\Transsys}{\eta}\, (\sem{\judg{\Gamma}{\varphi_2 \colon \tau_2}}{\Transsys}{\eta}) \\
\sem{\judg{\Gamma}{\mu (X\colon \tau).\varphi\colon\tau}}{\Transsys}{\eta} &= \bigsqcap_{\tau\to\tau} \{d \in \sem{\tau}{\Transsys}{\eta}\mid \sem{\judg{\Gamma, X\colon\tau^\varpos}{\varphi\colon \tau}}{\Transsys}{\eta[X\mapsto d]} \sqsubseteq_{\tau} d \} \\
\sem{\judg{\Gamma}{X \colon \tau}}{\Transsys}{\eta} &= \eta(X)
\end{align*}
If the type is clear from context, or not important, we simply write $\sem{\varphi}{\Transsys}{\eta}$. We write
$\Transsys, (\state_1,\dotsc,\state_\dimension) \models_\eta \varphi$ if $\varphi\colon\grtype$ and  
$(\state_1,\dotsc,\state_\dimension) \in \sem{\varphi}{\Transsys}{\eta}$. We say that two formulas $\varphi$ and
$\psi$ are \emph{equivalent}, written $\varphi \equiv \psi$, if for all $\Transsys$ and all $\eta$ we have 
$\sem{\varphi}{\Transsys}{\eta} = \sem{\psi}{\Transsys}{\eta}$.

It is well-known that the semantics of HFL and, hence, \PHFL is invariant under $\beta$-reduction and admits the
fixpoint unfolding principle, i.e.\ $\mu X.\, \varphi \equiv \varphi[\mu X.\, \varphi/X]$ where substitution
is defined as usual.

\paragraph{Bisimilarity.} A \emph{bisimulation} $R$ on an LTS 
$\Transsys = (\States, \{\Transition{}{a}{}\}_{a \in \Actions}, \Label)$ is a symmetric relation $R \subseteq \States \times \States$
satisfying the following for all $(s,t) \in R$.
\begin{itemize}
\item $\Label(s) = \Label(t)$,
\item if there is $a \in \Actions$ and $s' \in \States$ s.t.\ $\Transition{s}{a}{s'}$ then there is $t' \in \States$ with
      $\Transition{t}{a}{t'}$ and $(s',t') \in R$,
\item if there is $a \in \Actions$ and $t' \in \States$ s.t.\ $\Transition{t}{a}{t'}$ then there is $s' \in \States$ with
      $\Transition{s}{a}{s'}$ and $(s',t') \in R$.     
\end{itemize} 
Two states $s,t$ are bisimilar, written $s \sim t$, if there is a bisimulation $R$ with $(s,t) \in R$.

Let $d \ge 1$. A set $T \subseteq \States$ is called \emph{bisimulation-invariant} if for all 
$\bar{s} = (s_1,\ldots,s_d),\bar{t} = (t_1,\ldots,t_d) \in \States^d$ such that $s_i \sim t_i$ for all $i \in [d]$, we 
have $\bar{s} \in T$ iff $\bar{t} \in T$. The notion of bisimulation-invariance can straight-forwardly be lifted to 
objects of type $\semm{\tau}{\Transsys}$ for types $\tau \ne \grtype$, cf.\ \cite{DBLP:conf/concur/ViswanathanV04}. 

It is well-known that modal logics cannot distinguish bisimilar models, and it is not surprising that the extensions
beyond pure modal logic that are available in \PHFL do not break this property.

\begin{proposition}[\cite{DBLP:conf/concur/ViswanathanV04,conf/ifipTCS/LangeL14}]
\label{prop:bisiminv}
Let $d \ge 1$, $\Transsys$ be an LTS with state set $\States$ and $\varphi$ be a closed $\phfl{}{d}$ formula of type $\grtype$. Then 
$\semm{\varphi}{\Transsys} \subseteq \States^d$ is bisimulation-invariant.
\end{proposition}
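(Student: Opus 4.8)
The plan is to prove a stronger statement by induction on the structure of well-typed formulas, one that speaks about formulas of \emph{every} type and with free variables. Following \cite{DBLP:conf/concur/ViswanathanV04}, define for each type $\tau$ a binary ``bisimulation-invariance'' relation $R_\tau$ on $\semm{\tau}{\Transsys}$ by recursion on $\tau$: at the ground type, $T\,R_\grtype\,T'$ iff for all $\bar s=(s_1,\dots,s_d)$ and $\bar t=(t_1,\dots,t_d)$ with $s_i\sim t_i$ for every $i\in[d]$ one has $\bar s\in T$ iff $\bar t\in T'$; and at a function type, $f\,R_{\tau_2^v\to\tau_1}\,g$ iff $f(x)\,R_{\tau_1}\,g(y)$ whenever $x\,R_{\tau_2}\,y$ (the variance $v$ is immaterial here, it only governs well-typedness). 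A single object $o\in\semm{\tau}{\Transsys}$ is then bisimulation-invariant in the sense sketched in the excerpt exactly when $o\,R_\tau\,o$; note $R_\grtype$ is reflexive on bisimulation-invariant sets precisely because the identity is a bisimulation, so $\sim$ is reflexive. Call two environments $\eta,\eta'$ \emph{$\Gamma$-related} if they agree with the context $\Gamma$ and $\eta(z)\,R_\tau\,\eta'(z)$ for every hypothesis $z^v\colon\tau$ in $\Gamma$. The claim to prove by induction is: for every derivable $\judg{\Gamma}{\varphi\colon\tau}$ and all $\Gamma$-related $\eta,\eta'$ we have $\sem{\judg{\Gamma}{\varphi\colon\tau}}{\Transsys}{\eta}\,R_\tau\,\sem{\judg{\Gamma}{\varphi\colon\tau}}{\Transsys}{\eta'}$. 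The proposition is the instance $\Gamma=\emptyset$, $\tau=\grtype$, $\eta=\eta'$.

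The cases are then largely routine. For $p_i$ the semantics is a set that is bisimulation-invariant because bisimilar states carry the same propositions, hence $R_\grtype$-related to itself; the variable cases $x$ and $X$ are immediate from $\Gamma$-relatedness. Disjunction is handled by noting $R_\grtype$ is preserved by union, negation by noting it is preserved by complementation and that passing to $\overline\Gamma$ does not change $R$ (variances being irrelevant to the relation). The reindexing operator $\subst{\sigma}{\cdot}$ is dealt with using $s_{\sigma(i)}\sim t_{\sigma(i)}$, which follows from $s_i\sim t_i$; and $\mudiam{a_i}{\cdot}$ is exactly where the back-and-forth conditions of bisimulation are used: if $\bar s$ has an $a$-successor in its $i$-th component witnessing membership in the set denoted by $\varphi$, the zig-zag property produces a matching $a$-successor of $t_i$, and the induction hypothesis transfers membership. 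The $\lambda$- and application cases are the standard logical-relations bookkeeping: $\lambda(x\colon\tau_2).\varphi$ sends an $R_{\tau_2}$-related pair of arguments to an $R_{\tau_1}$-related pair of values by the induction hypothesis applied to the extended, still $\Gamma,x^v\colon\tau_2$-related environments, and application composes relatedness (the three application rules collapse to one argument since $\Gamma$-relatedness entails $\overline\Gamma$-relatedness).

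The genuine work, and the step I expect to be the main obstacle, is the fixpoint case $\mu(X\colon\tau).\varphi$. Writing $\Phi(d)=\sem{\judg{\Gamma,X^\varpos\colon\tau}{\varphi\colon\tau}}{\Transsys}{\eta[X\mapsto d]}$ and $\Phi'$ for the analogue under $\eta'$, the induction hypothesis (applied with the environments extended by $X\mapsto d$, $X\mapsto d'$ for any $d\,R_\tau\,d'$) says exactly that $\Phi\,R_{\tau^\varpos\to\tau}\,\Phi'$, and both are monotone; one must conclude that their least fixpoints are $R_\tau$-related. This is done by transfinite induction on the Knaster--Tarski approximants obtained by iterating $\Phi$, resp.\ $\Phi'$, from the bottom: the bottom elements satisfy $\bot_\tau\,R_\tau\,\bot_\tau$ (a separate induction on $\tau$, using $\emptyset\,R_\grtype\,\emptyset$ and that bottoms at function types are pointwise), successor steps use $\Phi\,R\,\Phi'$, and limit steps use that $R_\tau$ is \emph{admissible}, i.e.\ closed under suprema of $\sqsubseteq_\tau$-increasing chains of pairwise $R_\tau$-related elements --- also proved by induction on $\tau$, the ground case because a union of an increasing chain of bisimulation-invariant sets is bisimulation-invariant, and the function case because suprema in $\semm{\tau_2^v\to\tau_1}{\Transsys}$ are computed pointwise, reducing it to the ground case. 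At the closure ordinal the approximants equal the least fixpoints, which are therefore $R_\tau$-related. One could alternatively route everything through the bisimulation quotient $\bisim{\Transsys}$ and a zig-zag morphism, but the same admissibility and bottom lemmas about $R_\tau$ reappear there, so there is little to be gained.
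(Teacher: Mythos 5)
The paper does not prove this proposition itself but imports it from \cite{DBLP:conf/concur/ViswanathanV04,conf/ifipTCS/LangeL14}, and your logical-relations argument --- a type-indexed partial equivalence relation, a fundamental lemma proved by induction on typing derivations, and admissibility plus a bottom lemma for the transfinite fixpoint iteration --- is precisely the standard proof from that source and is correct. The only point you gloss over is that for an argument of variance $\varneg$ the order on $\semm{\tau_2^{\varneg}\to\tau_1}{\Transsys}$ is reversed pointwise, so the bottom of that lattice is the constant-top function and ``pointwise suprema'' are pointwise infima; this is repaired by proving the bottom/top and sup/inf closure statements simultaneously and changes nothing essential.
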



\begin{example}[\cite{Otto99}]
\label{ex:bisim}
A standard example shows that bisimilarity itself is definable in $\phfl{0}{2}$, provided that $\Prop$ and 
$\Actions$ are finite. The $\phfl{0}{2}$ formula
\begin{displaymath}
\varphi_\sim := \nu (X: \grtype). (\bigwedge\limits_{p \in \Prop} p_1 \leftrightarrow p_2) \wedge 
(\bigwedge\limits_{a \in \Actions} \mubox{a_1}\mudiam{a_2}X) \wedge \subst{1 \mapsto 2, 2 \mapsto 1}X
\end{displaymath}
is satisfied by a pair $(s,t)$ of some $\Transsys$ iff $s \sim t$. The formula essentially states that $(s,t)$ needs to belong to
the largest set $X$ of states $(s',t')$ that agree on all propositions (first conjunct) and for which $t'$ can match any $a$-transition
out of $s'$, for any $a \in \Actions$, to a pair $(s'',t'') \in X$ (second conjunct). Moreover, $X$ needs to be a symmetric relation (third
conjunct).
\end{example}

To exemplify the use of higher-orderness (here: first-order functions) we can use the definability of finite-trace equivalence
in $\phfl{1}{2}$. 
\begin{example}
\label{ex:traceequiv}
Two states $s,t$ are finite-trace equivalent if whenever there is a sequence 
$\Transition{s}{a_1}{s_1}\Transition{}{a_2}{s_2}\Transition{}{a_3}{}$ $\Transition{\ldots}{a_n}{s_n}$ then there are $t_1,\ldots,t_n$ 
s.t.\ $\Transition{t}{a_1}{t_1}\Transition{}{a_2}{t_2}\Transition{}{a_3}{}\Transition{\ldots}{a_n}{t_n}$ and vice-versa.
\begin{displaymath} 
\varphi_{\mathsf{fte}} := \big(\nu F(x,y). (x \leftrightarrow y) \wedge \bigwedge\limits_{a \in \Actions} F(\mudiam{a_1}x,\mudiam{a_2}y))(\mytrue,\mytrue)
\end{displaymath}
is then satisfied by a pair $(s,t)$ iff $s$ and $t$ are finite-trace equivalent in the sense above. The greatest fixpoint in the formula
expresses an infinite conjunction over all finite paths, these can be thought of to be built step-wise via fixpoint unfolding; the $n$th unfolding
expresses that all finite traces of length $n$ are available in the first component of the tuple (and built via $\mudiam{a_1}$) iff they
are available in the second component (built via $\mudiam{a_2}$.
For better readability we have omitted the types in the formula. The types of $x,y$ are $\grtype$ and that of $F$ is, consequently,
$\grtype^\varboth \to \grtype^\varboth \to \grtype$. 

We remark that it is easily possible to extend the formula to also check for matching atomic propositions along the traces emerging
from $s$ and $t$. Cf.\ \cite{LLVG:TCS:2014} for further examples of how various other process equivalences and preorders can be expressed in
$\phfl{1}{}$.
\end{example}

\subsection{Higher-Order Logic with Least Fixpoints}

We introduce Higher-Order Logic with Least Fixpoints (\HOLFP) to make use of the characterisation of $k$-\ExpTime over the class of ordered structures as the queries definable in order-$(k+1)$ \HOLFP, due to Immerman and Vardi \cite{DBLP:journals/iandc/Immerman86,DBLP:conf/stoc/Vardi82,Imm:lanccc}, resp.\ Freire and Martins \cite{FREIRE201171}.

\paragraph{Types.}
Types for Higher-Order Logic with Least Fixpoints  are constructed, similar to those for \PHFL, from a single base
type and one constructor: $\tau' \Coloneqq \indtype \mid (\tau',\dotsc,\tau')$. Here, however, $\indtype$ is the type of 
\emph{individuals} (like states rather than sets of states), and the tuple type is used to denote (higher-order) relations. 
We define the order\footnote{Note the discrepancy in the traditional ways to assign numerals to orders in the two logics
considered here: order $1$ in \HOLFP refers to, like in ``First-Order Logic'', individual elements and order $2$ is for 
relations like sets thereof. In \PHFL, order $1$ refers to the order of a function, i.e.\ one that takes sets of arguments.
This explains why $\phfl{k}{}$ corresponds to the fragment of \HOLFP of order $k+1$, see also the right column in 
Fig.~\ref{fig:overview}.} of a type via $\ordho{\indtype} = 1$ and 
$\ordho{\tau'_1,\dots,\tau'_n} = 1 + \max \{ \ordho{\tau'_1},\dotsc,\ordho{\tau'_n} \}$.

Let $\Transsys = (\States, \{\Transition{}{a}{}\}_{a\in\Actions},\Label)$ be an LTS. This induces a set-theoretic 
interpretation of types via $\semm{\indtype}{\Transsys} = \States$ and
$\semm{(\tau'_1,\dots,\tau'_n)}{\Transsys} = 2^{\semm{\tau'_1}{\Transsys} \times \dotsb \times \semm{\tau'_n}{\Transsys}}$.

\paragraph{Syntax.}
Let $\hovars = \{X,\dotsc\}$ be a countable set of \emph{higher-order variables}, each implicitly equipped with a type. 
Let $\Prop, \Actions$ be sets
of propositions, resp.\ actions. The syntax of \HOLFP formulas is derived from the following grammar:
\[
\varphi \Coloneqq p(X) \mid a(X,Y) \mid X(Y_1,\dotsc,Y_N) \mid \neg \varphi \mid \varphi \vee \varphi 
\mid \exists (X\colon\tau').\ \varphi \mid \big(\lfp(X,Y_1,\dotsc,Y_N).\ \varphi\big)(Z_1,\dotsc,Z_n)
\]
where $p \in \Prop$, $a \in \Actions$ and $X,Y_1,\dotsc,Y_n,Z_1,\dotsc,Z_n \in \hovars$. Again, dual
operators such as $\wedge$ and $\gfp$ (for greatest fixpoints) are available via the obvious dualities. 
The variable $X$ is bound in $\exists (X\colon\tau').\ \varphi$ and $X,Y_1,\dotsc,Y_N$ are bound in 
$\big(\lfp(X,Y_1,\dotsc,Y_N).\ \varphi\big)(Z_1,\dotsc,Z_n)$.
A formula is \emph{well-formed} if each variable is bound at most once, and, moreover, variables occur
only in a way that matches their type. For example, only a variable of type $\indtype$ can occur in a subformula
of the form $p(X)$, and if a subformula of the form $X(Y_1,\dotsc,Y_n)$ occurs, then $X$ has type $(\tau'_1,\dotsc,\tau'_n)$
for some $\tau'_1,\dotsc,\tau'_n$ and $Y_i$ has type $\tau'_i$ for all $1 \leq i \leq n$. Moreover, in a subformula of the
form $\big(\lfp(X,Y_1,\dotsc,Y_N).\ \varphi\big)(Z_1,\dotsc,Z_n)$, the variable $X$ occurs only under an even number of
negations in $\varphi$. For a more detailed introduction into Higher-Order Logic including formal ways to define 
well-formedness of formulas, cf.\ \cite{vanbenthemdoets:1983a}.
In a well-formed formula each variable has a unique type. An \HOLFP formula $\varphi$ has order $k$ if the order
of the highest type of a variable in $\varphi$ is at most $k$. We write $\holfp{k}$ for the set of \HOLFP formulas
of order at most $k$.

\paragraph{Semantics.}
Let $\Transsys = (\States, \{\Transition{}{a}{}\}_{a\in\Actions},\Label)$ be an LTS.
A variable assignment $\alpha$ is a function that maps each variable in \hovars\ of type $\tau'$ into
$\semm{\tau'}{\Transsys}$. 

Let $\varphi$ be an \HOLFP formula with free variables in $X, Y_1,\dotsc,Y_n$ such that $X$ occurs only under
an even number of negations in $\varphi$ and let $\tau' = (\tau'_1,\dotsc,\tau'_n)$ be the type of $X$ in $\varphi$
while $\tau'_i$ is the type of $Y_i$ in $\varphi$ for $1 \leq i \leq n$. Then, given some variable assignment $\alpha$,
the formula $\varphi$ defines a monotone
function $f \colon \semm{\tau'}{\Transsys} \to \semm{\tau'}{\Transsys}$ via
\[f(M) \mapsto \{(m_1,\dotsc,m_n) \in \sem{\varphi}{\Transsys}{\alpha[Y_1\mapsto m'_1,\dotsc,Y_n \mapsto m'_n]} 
\mid (m'_1,\dotsc,m'_n) \in M\}.\]
By the Knaster-Tarski-Theorem \cite{Kna28,Tars55} this function has a least fixpoint denoted by
$\LFP(X,Y_1,\dotsc,Y_n) \varphi$. Note that we suppress $\Transsys$ and $\alpha$ here since they will always
be clear from context.

The satisfaction relation between an LTS $\Transsys$, a variable assignment $\alpha$ and an \HOLFP formula $\varphi$
is defined inductively as follows. 
\begin{align*}
\Transsys, \alpha \models p(X) &\text{ iff } p \in \Label(\alpha(X)) \\
\Transsys, \alpha \models a(X,Y) &\text{ iff } \Transition{\alpha(X)}{a}{\alpha(Y)} \\
\Transsys, \alpha \models X(Y_1,\dotsc,Y_n) &\text{ iff } (\alpha(Y_1),\dotsc,\alpha(Y_n)) \in \alpha(X) \\
\Transsys, \alpha \models \neg \varphi &\text{ iff } \Transsys, \alpha \not \models \varphi \\
\Transsys, \alpha \models \varphi_1 \vee \varphi_2 &\text{ iff } \Transsys, \alpha \models \varphi_1 \text{ or } \Transsys, \alpha \models \varphi_2 \\
\Transsys, \alpha \models \exists (X\colon\tau') &\text{ iff ex.\ } d \in \semm{\tau'}{\Transsys} \text{ s.t.\ } \Transsys, \alpha[X \mapsto d] \models \varphi \\
\Transsys, \alpha \models \big(\lfp(X,Y_1,\dotsc,Y_N).\ \varphi\big)(Z_1,\dotsc,Z_n) &\text{ iff } (\alpha(Z_1),\dotsc,\alpha(Z_n)) \in \LFP(X,Y_1,\dotsc,Y_n) \varphi.
\end{align*}

\subsection{Descriptive Complexity}

A \emph{query} (of dimension $d$) is a set $Q$ of pairs $(\Transsys,(s_1,\ldots,s_d))$ s.t.\ each $\Transsys$ is finite. It is 
expressed by the \HOLFP formula 
$\varphi$ with free variables $X_1,\dotsc,X_d$, if $Q = \{(\Transsys, (\state_1,\dotsc,\state_d)) \mid \Transsys, [X_1 \mapsto \state_1,\dotsc, X_d\mapsto \state_d] \models \varphi \}$.
Let $k \ge 0$. The complexity class $k$-\ExpTime is defined as $\mathit{DTIME}(2^{n^{\mathcal{O}(1)}}_k)$ where 
$2^m_0 := m$ and $2^m_{k+1} := 2^{2^{m}_k}$. Note that $0$-\ExpTime equals $\PTime$.

For a complexity class $\mathcal{C}$, a $\mathcal{C}$-query is one that can be decided within the resource bounds given by
$\mathcal{C}$. We write $\bisim{\mathcal{C}}$ for the complexity class of $\mathcal{C}$-queries that are bisimulation-invariant. 

We say that a logic $\mathcal{L}$ \emph{captures} a complexity class $\mathcal{C}$ if the model checking problem for $\mathcal{L}$
is in $\mathcal{C}$ and each $\mathcal{C}$-query can be expressed in $\mathcal{L}$. 
 
The Immerman-Vardi Theorem characterises the \PTime-queries 
\emph{over ordered structures} as those expressible in $\holfp{1}$, resp.\ first-order logic with least fixpoints. It's generalisation is the following:

\begin{proposition}[\cite{DBLP:journals/iandc/Immerman86,DBLP:conf/stoc/Vardi82,Imm:lanccc,FREIRE201171}]
\label{prop:IV-gen}
For each $k \ge 0$, $\holfp{k+1}$ captures $k$-\ExpTime over the class of ordered structures.
\end{proposition}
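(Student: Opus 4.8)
The plan is to establish the two inclusions separately: that the model-checking problem for $\holfp{k+1}$ lies in $k$-\ExpTime, and that, over ordered structures, every $k$-\ExpTime-query is $\holfp{k+1}$-definable. The argument is essentially the one underlying the cited works, so I only sketch it.

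For the first, easier inclusion I would evaluate a fixed formula $\varphi$ bottom-up over a finite LTS $\Transsys$ with $|\States| = n$. The key preparatory step is a size bound, proved by induction on types: $|\semm{\tau'}{\Transsys}| \le 2^{n^{O(1)}}_{\ordho{\tau'}-1}$, since an individual is one of $n$ objects and a relation whose argument types all have order $\le j$ is a subset of an $O(1)$-fold product of sets of size $\le 2^{n^{O(1)}}_{j-1}$, of which there are at most $2^{(2^{n^{O(1)}}_{j-1})^{O(1)}} = 2^{n^{O(1)}}_{j}$. Hence for a formula of order $k+1$ every object handled during evaluation has size at most $2^{n^{O(1)}}_k$. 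A quantifier $\exists(X\colon\tau')$ is then resolved by iterating over $\semm{\tau'}{\Transsys}$; an $\lfp$-operator is computed by fixpoint iteration, which stabilises after at most $|\semm{\tau'}{\Transsys}|$ rounds because the iterates form an ascending chain in the lattice $\semm{\tau'}{\Transsys}$; the Boolean and atomic cases are immediate. As $\varphi$ is fixed, this yields a $k$-\ExpTime procedure, and a \PTime one for $k=0$, which is the Immerman--Vardi direction.

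For the converse I would fix a Turing machine $M$ deciding the query in time $2^{n^{O(1)}}_k$ — hence also in that much space — on structures of size $n$, and build an $\holfp{k+1}$ formula equivalent over ordered structures to ``$M$ accepts''. The core is a bootstrapping of higher-order arithmetic: using the linear order on the universe, I would define inductively, for $0 \le j \le k$, a notion of \emph{level-$j$ number} together with formulas for ``$<$'', ``is zero'', ``is maximum'' and ``successor'', such that the level-$j$ numbers range exactly over $\{0,\dots,2^{n^{O(1)}}_j-1\}$. At level $0$ this is the lexicographic order on $c$-tuples of individuals, available because $\Transsys$ is ordered; a level-$(j+1)$ number is coded by a relation $B$ whose ``bits'' are indexed by level-$j$ numbers, its value being $\sum\{2^i \mid i \in B\}$, and order and successor on level-$(j+1)$ numbers are expressed by quantifying over level-$j$ numbers to locate the most significant differing bit, resp.\ the trailing block of $1$-bits, invoking the level-$j$ predicates. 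With arithmetic up to $2^{n^{O(1)}}_k$ available, a time step and a tape position become level-$k$ numbers, the finite local transition behaviour of $M$ is captured by a fixed formula, and an $\lfp$-operator binding a relation $\mathsf{Cell}(\bar t,\bar p,\sigma)$ — ``at time $\bar t$, cell $\bar p$ holds symbol $\sigma$'' — computes the run by recursion on $\bar t$: the base stage encodes the initial configuration read off the ordered input, each later stage depends only on already-computed stages for smaller times (which makes the stage operator monotone, hence admissible), and acceptance is read off the last configuration.

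The hard part is the arithmetic bootstrapping together with the order bookkeeping: the inductive coding of level-$j$ numbers must be arranged so that defining their order and successor — and then the fixpoint for the run of $M$ — only ever quantifies over, and applies relations to, objects of order at most $k+1$. It is precisely this accounting that makes $\holfp{k+1}$, rather than $\holfp{k}$ or $\holfp{k+2}$, the logic matching $k$-\ExpTime. Everything else — definability of the initial configuration from the ordered input, of the local transition relation, monotonicity of the stage operator, and the $k$-\ExpTime bound on evaluation — is routine.
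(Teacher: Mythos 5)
The paper gives no proof of this proposition: it is imported from the cited literature (Immerman--Vardi for $k=0$, Immerman for $k=1$, Freire and Martins for general $k$), so there is nothing in the paper itself to compare your argument against. Your sketch is a faithful reconstruction of the standard argument from those sources: the upper bound via bottom-up evaluation together with the inductive bound $|\semm{\tau'}{\Transsys}| \le 2^{n^{O(1)}}_{\ordho{\tau'}-1}$ on type semantics, and the lower bound via Turing-machine simulation using an inductive coding of $2^{n^{O(1)}}_j$-valued ``numbers'' as order-$(j+1)$ objects built on the given linear order. You correctly single out the only delicate point, namely the order bookkeeping; note that it must be read against the paper's implicit convention that the $\lfp$-bound relation may be one order above the quantified variables (otherwise even $\holfp{1}$ would not be FO(LFP)), under which your placement of the run relation over level-$k$ numbers does land in $\holfp{k+1}$ as required.
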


The ordering is only important for the case of $k=0$ as such an ordering can be defined in second-order logic, i.e.\ as soon as
$k \ge 1$. It is an open problem whether a logic exists that captures \PTime over the class of all structures 
(cf.\ e.g.\ \cite{DBLP:conf/lics/Grohe08}). 

The first capturing result for a bisimulation-invariant class is given by Otto's Theorem.

\begin{proposition}[\cite{Otto99}]
\label{prop:otto}
The polyadic modal $\mu$-calculus \polymucalc, or, equivalently $\phfl{0}{}$, captures $\bisim{\PTime}$,
or equivalently, $\bisim{$0$-\ExpTime}$.
\end{proposition}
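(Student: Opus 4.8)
The claim consists of two inclusions. The easy one, $\phfl{0}{} \subseteq \bisim{\PTime}$, needs almost nothing new: bisimulation-invariance is Proposition~\ref{prop:bisiminv}, and since every subformula of an order-$0$ formula has type $\grtype$ -- so that $\phfl{0}{d}$ is literally the $d$-dimensional polyadic $\mu$-calculus, with no $\lambda$-abstractions or applications -- a straightforward bottom-up evaluation solves model checking in polynomial time: over a finite LTS with $n$ states each subformula denotes an element of the lattice $2^{\States^d}$, whose height is $n^d$, so every (possibly nested) fixpoint is computed iteratively and stabilises after polynomially many applications of a monotone, polynomial-time step; as $d$ and the formula are fixed, the whole computation is polynomial in $n$.

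For the hard inclusion I would follow the route that the present paper later generalises: reduce to the ordered setting, apply the Immerman--Vardi Theorem (Proposition~\ref{prop:IV-gen} for $k=0$), and translate back. Let $Q$ be a bisimulation-invariant \PTime-query of dimension $d_0$. Given a pointed input $(\Transsys,(s_1,\dots,s_{d_0}))$, bisimulation-invariance lets us first delete every state not reachable from $\{s_1,\dots,s_{d_0}\}$ and then pass to the bisimulation quotient; call the result $\bisim{\Transsys}$. It is bisimulation-minimal -- no two distinct states are bisimilar -- and all of its states are reachable from the images of the points. The map $(\Transsys,(s_1,\dots,s_{d_0})) \mapsto (\bisim{\Transsys},([s_1],\dots,[s_{d_0}]))$ is \PTime-computable, so the composite query $Q'$ that reads the ordered structure $(\bisim{\Transsys},{<})$, where $<$ is a canonical linear order on $\bisim{\Transsys}$, is a \PTime-query over ordered structures, hence by Proposition~\ref{prop:IV-gen} definable by an $\holfp{1}$ formula $\psi$ (i.e.\ first-order logic with least fixpoints) using the order symbol and free individual variables $X_1,\dots,X_{d_0}$ for the points.

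The order is exactly the place where one must argue about the transition system rather than an abstract structure, and the technical heart of the proof is the lemma that it is definable in the $2$-dimensional polyadic $\mu$-calculus: there is a fixed $\phfl{0}{2}$ formula $\varphi_\preceq$ that on every finite LTS defines a preorder $\preceq$ with $s \preceq t$ and $t \preceq s$ iff $s \sim t$, hence a \emph{linear} order on the bisimulation quotient. I would build $\varphi_\preceq$ so that it recursively performs a lexicographic comparison of the bisimulation-unfolding trees of $s$ and $t$: compare the proposition labels at the roots along a fixed order on the finite set $\Prop$, and, if they coincide, run through the actions $a \in \Actions$ in a fixed order, for each comparing the sets of $a$-successor types recursively; since two non-bisimilar states have unfoldings that differ at some finite depth, the strict part of this comparison is a least fixpoint, suitably interleaved with the greatest fixpoint $\varphi_\sim$ of Example~\ref{ex:bisim} that recognises $\sim$ itself. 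Getting the nesting of $\mu$ and $\nu$ right and verifying well-foundedness of the comparison is, I expect, the main obstacle of the whole argument.

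It remains to translate $\psi$ into $\phfl{0}{d}$ for a suitable dimension $d \ge d_0$ -- which only makes the formula express a harmless cylindrification of $Q$. A $d$-tuple of states stores the $d_0$ points in its first components and a valuation of the finitely many first-order variables of $\psi$ in the remaining ones. Boolean connectives translate homomorphically; an atom $p(x)$ becomes a proposition test on the relevant component, $a(x,y)$ becomes ``moving the $x$-component along some $a$-edge makes it bisimilar to the $y$-component'' (so relations are read off the quotient, using $\varphi_\sim$), the order atom becomes the strict part $\varphi_\preceq \wedge \neg\varphi_\sim$ of the definable order, index rearrangements $\subst{\sigma}{\cdot}$ bring the relevant components into canonical position, and $\lfp$ becomes $\mu$ -- legitimate because the monotone operators of $\holfp{1}$ on relations over $\States$ are, via this encoding, exactly monotone $\phfl{0}{d}$-operators on $2^{\States^d}$. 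The one remaining non-local construct is $\exists x_i$: since the polyadic $\mu$-calculus has no global modality, I would simulate it using that every state of the quotient is reachable from the points -- reset the $i$-th component to a copy of one of the point components via a $\subst{\sigma}{\cdot}$, then let a least fixpoint walk that component along arbitrary edges, testing the translation of the matrix at every reached state; as all atoms used are $\sim$-invariant in each component, this ranges exactly over the states of the quotient. Composing the translation of $\psi$ with the definition of $\preceq$ for the order symbol yields the desired formula, and correctness follows by induction on $\psi$ together with bisimulation-invariance of $Q$.
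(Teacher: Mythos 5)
Your plan is correct and follows essentially the same route as the paper's own sketch of Otto's Theorem: polynomial-time bottom-up model checking for the easy inclusion, and for the hard one a reduction to the bisimulation quotient, a $\phfl{0}{2}$-definable linear order on bisimulation classes (whose strict part is a least fixpoint because non-bisimilarity has a well-founded witness), an appeal to the Immerman--Vardi Theorem over the resulting ordered structure, and a translation of $\holfp{1}$ into $\phfl{0}{d}$ that handles individual variables by polyadicity, $a(X_i,X_j)$ by a modality followed by $\varphi_\sim$, and existential quantification by reachability from the point components. You also correctly single out the definability of the order as the technical heart of the argument, which is exactly the key observation the paper highlights.
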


We give a quick sketch of its proof in order to prepare for the technical developments in Sects.~\ref{sec:quantifier} and
\ref{sec:lower}. The model checking problem for $\phfl{0}{}$ is readily seen to be in \PTime (see also \cite{LL-FICS12}).
The interesting part is to show that every bisimulation-invarariant \PTime query can be expressed 
in $\phfl{0}{}$. In principle, this could be done by encoding runs of polynomially time-bounded Turing machines,
but it is not immediately clear how bisimulation-invariance of the query in question can be used.
Instead, the proof for this rests on a key observation: non-bisimilarity of two states can be expressed in $\phfl{0}{2}$ by 
a least-fixpoint formula, since bisimilarity can be defined via the greatest fixpoint formula in Ex.~\ref{ex:bisim}. Hence, 
non-bisimilarity of two states, as a least fixpoint, has a well-founded reason, i.e.\ one that can be found in finitely many fixpoint unfoldings. 
Ordering the atomic types in an arbitrary way entails a total order on the bisimulation-equivalence classes, and this order can be defined in $\phfl{0}{2}$.
Hence, the LTS in question is ordered, and the Immerman-Vardi Theorem is available, whence the problem reduces
to showing that every query defined by a bisimulation-invariant $\holfp{1}$-query can be expressed equivalently
in $\phfl{0}{}$.

This latter reduction now follows from a rather straightforward translation from $\holfp{1}$ formulas to $\phfl{0}{}$ formulas.
Variables of type $\indtype$ are emulated through polyadicity and variables of type $(\indtype, \dotsc,\indtype)$ are represented
as order-$0$ $\phfl{0}{}$-variables. Since, in the bisimulation-invariant setting, one can always assume that the LTS in question
already is its own bisimulation quotient, bisimilarity and equality coincide. Hence, a subformula of the form $a(X_i,X_j)$ can be 
replaced by the statement that the $i$th component of the relation defined has an $a$-successor that is bisimilar and, hence
equal to the $j$th component. Moreover, since all states in such a bisimulation quotient of a pointed LTS are reachable from 
a distinguished state, existential quantification can be replaced by reachability of a suitable state. It remains to translate 
least fixpoints in $\holfp{1}$ into order-$0$ fixpoints of $\phfl{0}{}$.

Using reasoning along similar lines, Otto's Theorem has also been generalised by one order.

\begin{proposition}[\cite{conf/ifipTCS/LangeL14}]
\label{prop:capturebisimexptime}
$\phfl{1}{}$ captures $\bisim{\ExpTime}$.
\end{proposition}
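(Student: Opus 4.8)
The plan is to prove both inclusions: the model checking problem for $\phfl{1}{}$ lies in \ExpTime, and every \bisim{\ExpTime}-query is definable in $\phfl{1}{}$. The first (upper-bound) half is essentially book-keeping: a closed $\phfl{1}{d}$ formula over an LTS of size $n$ has semantics living in lattices of height at most exponential in $n$ (an order-$1$ function from $2^{\States^d}$ to $2^{\States^d}$ is an object of exponential size, and there are doubly-exponentially many of them, but fixpoint iteration stabilises after exponentially many steps since the pointwise order on the relevant function space has exponential height when restricted to the monotone functions reachable by iteration), so a Knaster-Tarski fixpoint iteration combined with compositional evaluation of the $\lambda$-calculus fragment runs in \ExpTime. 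I would cite the general \HFL/\PHFL model-checking analysis for this and only remark that polyadicity multiplies the state space by a fixed polynomial in the exponent, which does not leave \ExpTime.

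The substantive half is expressibility, and here I would follow the template of Otto's Theorem as sketched in the excerpt, lifted by one order. First I would invoke Proposition~\ref{prop:IV-gen} with $k=1$: every \ExpTime-query over \emph{ordered} structures is expressible in $\holfp{2}$. Next, as in the proof sketch of Proposition~\ref{prop:otto}, I would reduce the bisimulation-invariant setting to the ordered one: given a \bisim{\ExpTime}-query $Q$, pass to the bisimulation quotient of the input pointed LTS (legitimate by Proposition~\ref{prop:bisiminv} and bisimulation-invariance of $Q$), observe that on the quotient bisimilarity coincides with equality, and use the $\phfl{0}{2}$ definability of (non-)bisimilarity from Example~\ref{ex:bisim} together with an arbitrary ordering of the finitely many atomic types to define, \emph{inside} $\phfl{0}{2}$, a total order on states. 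Thus the quotient LTS is, definably, an ordered structure, and $Q$ on it is captured by some $\holfp{2}$ formula $\psi$. The remaining work is a translation of $\holfp{2}$ into $\phfl{1}{}$.

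For that translation I would match up the two type hierarchies as the footnote in the excerpt indicates: an \HOLFP variable of type $\indtype$ is emulated by a coordinate of the polyadic tuple; a variable of type $(\indtype,\dots,\indtype)$ (an order-$2$ \HOLFP relation) becomes a $\phfl{1}{}$ object of ground type $\grtype$; and an \HOLFP variable of order $3$ — a relation over such relations — becomes a $\phfl{1}{}$ object of order $1$, i.e.\ a function $\grtype \to \dots \to \grtype$. Atomic formulas $p(X)$ and $a(X,Y)$ translate as in Otto's proof, using that $a$-edges between states are detected by $\mudiam{a_i}{}$ and that equality is bisimilarity hence definable; first- and second-order existential quantification becomes, respectively, reachability of a suitable state (all states of a pointed quotient are reachable) and an $\phfl{1}{}$ join $\bigsqcup$ over the relevant lattice, encoded by a fixpoint that accumulates candidate relations — this is exactly the enumeration technique the introduction refers to for Section~\ref{sec:quantifier}, specialised to order $3$; and \HOLFP least fixpoints at type $(\indtype,\dots,\indtype)$ or one order up translate to $\phfl{0}{}$- resp.\ order-$1$ $\mu$-fixpoints, using the $\beta$-reduction and fixpoint-unfolding invariances noted after the semantics. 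One checks by induction on $\psi$ that the translation preserves semantics on quotient LTSs; composing with the order-definability step yields an $\phfl{1}{}$ formula equivalent to $Q$ on all pointed LTSs, since both sides are bisimulation-invariant.

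The main obstacle is the faithful encoding of second-order (and the one level of third-order) \emph{quantification} in $\phfl{1}{}$: unlike first-order quantification, which Otto handles cheaply via reachability, quantifying over all subsets of $\States^d$, or over all monotone functions on them, must be simulated by iterating a fixpoint that walks through an exponentially (resp.\ doubly exponentially) large search space, and one must argue both that the enumeration is exhaustive and that it respects the variance/monotonicity constraints of the \PHFL type system so that the $\mu$-fixpoints remain well-typed. Getting the bookkeeping of variances right — so that, e.g., a quantified relation that occurs negatively in $\psi$ is handled by the $\varboth$ or $\varneg$ typing rules without breaking well-typedness of the surrounding fixpoint — is the delicate point, and is presumably what Section~\ref{sec:quantifier} is devoted to; everything else is a careful but routine induction.
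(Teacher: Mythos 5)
Your plan is correct and follows essentially the same route as the paper: the paper does not reprove this proposition (it is imported by citation from \cite{conf/ifipTCS/LangeL14}), but its general argument for Theorem~\ref{thm:capture} --- upper bound by reducing $\phfl{k}{d}$ model checking to \HFL, lower bound via Prop.~\ref{prop:IV-gen}, passage to the bisimulation quotient with its $\phfl{0}{2}$-definable order, and enumeration-based emulation of higher-order quantification as in Sect.~\ref{sec:quantifier} --- is exactly your plan instantiated at $k=1$. The one harmless inaccuracy is that $\holfp{2}$ contains no order-$3$ variables, so the ``one level of third-order quantification'' you worry about never arises here: the order-$1$ machinery of $\phfl{1}{}$ is needed only for the enumeration fixpoint of type $\grtype\to\grtype$ that iterates over sets, not to represent any \HOLFP object.
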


\section{Upper Bounds}
\label{sec:upper}

Capturing a complexity class $\mathcal{C}$, defined by some restricted resource consumption, by a logic $\mathcal{L}$ 
contains two parts: what is commonly seen as the lower bound consists of showing that every query which can be evaluated
in complexity $\mathcal{C}$ can also be defined in the logic $\mathcal{L}$. The upper bound is established by showing
the contrary. This is relatively easy as it suffices to show that queries definable in $\mathcal{L}$ can be evaluated 
in complexity $\mathcal{C}$, in other words that the model checking problem for $\mathcal{L}$ belongs to class $\mathcal{C}$.

Here we do this for the fragments of \PHFL of arbitrary but fixed arity $d$ and arbitrary order $k$, w.r.t.\ the classes
$k$-\ExpTime of the exponential time hierarchy. We do so by extending the reduction of the model checking problem for a 
polyadic logic to that of its monadic fragment \cite{LL-FICS12}. Note that $\phfl{k}{1}$ equals $\hfl{k}$ -- the fragment 
of (non-polyadic) Higher-Order Fixpoint Logic of formulas of type order at most $k$. The complexity of model checking such 
fragments is known: 

\begin{proposition}[\cite{als-mchfl07}]
\label{prop:hflmodelcheck}
Let $k \ge 1$. The model checking problem for $\hfl{k}$ is $k$-\ExpTime-complete.
\end{proposition}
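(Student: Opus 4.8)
The statement bundles a matching upper and lower bound, so the plan is to prove membership in $k$-\ExpTime and $k$-\ExpTime-hardness separately. The hypothesis $k \ge 1$ is what makes the claim meaningful: $\hfl{0}$ is just the modal $\mu$-calculus, whose combined model-checking complexity is famously not known to lie in \PTime, so the case $k = 0$ would be unclear.

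For the upper bound I would evaluate $\semm{\varphi}{\Transsys}$ compositionally. One processes the subformulas of $\varphi$ bottom-up, keeping for each subformula of type $\tau$ an explicit representation of its value as an element of the lattice $\semm{\tau}{\Transsys}$: a ground value is a subset of $\States$, and a value of type $\tau_1 \to \tau_2$ is stored as a lookup table indexed by (representations of) elements of $\semm{\tau_1}{\Transsys}$; $\lambda$-variables are handled through an environment, and crucially one does \emph{not} $\beta$-normalise first. Fixpoints are computed by Kleene iteration from the bottom of the lattice. Correctness is immediate from the semantics recalled in Sect.~\ref{sec:prel}; the content is the time bound, which I would establish by a simultaneous induction on the order $j$ of the types occurring, controlling (i) the size of the representation of an order-$j$ value, (ii) the height of $\semm{\tau}{\Transsys}$ for an order-$j$ type $\tau$ — equivalently the number of Kleene steps to reach an order-$j$ fixpoint — and (iii) the cost of one application of each operator (join, $\langle a_i\rangle$, index reindexing, application, abstraction). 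Writing $n = |\States|$ and $m = |\varphi|$, quantity (ii) is $O(n)$ at order $0$, and each increase of the order multiplies the relevant quantities by at most one further exponential in $n+m$; since $\varphi$ contains at most $m$ nested fixpoints, iteration counts only multiply and never add an exponential, so the whole computation stays within $2^{(n+m)^{O(1)}}_{k}$ steps.

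For the lower bound I would exploit exactly the feature that separates \HFL from $\mucalc$: with polynomial-size formulas it can take least fixpoints over lattices of $k$-fold-exponential height, because the tower of exponentials is paid for by the type order rather than by the formula length. The plan is to reduce from the word problem of a fixed $k$-\ExpTime-complete Turing machine $M$ (or from any $k$-\ExpTime-complete problem): on input $w$ of length $n$, $M$ runs for at most $T = 2^{p(n)}_{k}$ steps on at most $T$ tape cells. I would build, in polynomial time, a small LTS $\Transsys_w$ and a formula $\varphi_M \in \hfl{k}$ of size polynomial in $n$ that encodes a configuration of $M$ as an object of an order-$(k{-}1)$ type — a $(k{-}1)$-fold-exponentially large object; for $k = 1$, say, a predicate over tuples of a fixed state set obtained via a binary-counter addressing scheme for the tape positions, and for larger $k$ a function whose table is indexed by the $(k{-}2)$-fold-exponentially many addresses — and then express ``the initial configuration reaches an accepting configuration under the one-step relation of $M$'' as a least fixpoint over such encodings, which lives at order $k$. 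The one-step test is local: given the addressing scheme it is a polynomial-size Boolean/modal combination. The least fixpoint then computes precisely the reachable configurations, so $\Transsys_w \models \varphi_M$ iff $M$ accepts $w$, and since $M$ was an arbitrary $k$-\ExpTime machine this yields $k$-\ExpTime-hardness.

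The hard part will be the lower-bound encoding: designing an addressing scheme under which ``one transition of $M$'' on a $(k{-}1)$-fold-exponentially large configuration is still expressible by a polynomial-size $\hfl{k}$ formula, and then verifying that the least fixpoint has exactly $T = 2^{p(n)}_{k}$-much room — not one exponential less, which would fail to capture all of $k$-\ExpTime, and not one more, which would push the formula out of $\hfl{k}$. On the upper-bound side the analogous quantitative point — that nested Kleene iterations together with table-lookup for higher-order application all stay inside the $k$-fold-exponential regime — is routine once the three induction invariants above are set up, but it still has to be carried out with care.
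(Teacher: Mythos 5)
First, a point of comparison: the paper does not prove Prop.~\ref{prop:hflmodelcheck} at all --- it is imported as a black box from the cited reference \cite{als-mchfl07} --- so there is no in-paper argument to measure yours against, only the known proof in that source and the requirements of the statement itself. Your upper bound is essentially the standard argument: bottom-up evaluation with explicit table representations of higher-order values, no prior $\beta$-normalisation, Kleene iteration for fixpoints, and an induction on type order showing that representation sizes and lattice heights gain one exponential per order while nested fixpoints only multiply (never exponentiate) iteration counts. That part is sound, modulo the routine bookkeeping you already identify.

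The lower bound as you state it has a genuine quantitative gap, located exactly where you flag ``the hard part''. A machine running in time $T = 2^{p(n)}_{k}$ may use $T$ tape cells, so a single configuration is a $k$-fold-exponentially long string. Over an LTS whose state set $\States$ has size polynomial in $n$, an object of an order-$(k{-}1)$ type carries only $(k{-}1)$-fold-exponentially many independent bits; for $k=1$ an order-$0$ object is a subset of $\States$, i.e., polynomially many bits, and even a predicate over fixed-arity tuples of states (which is not available in monadic \hfl{1} anyway) gives only polynomially many addressable positions. So a configuration of a deterministic $k$-\ExpTime machine does not fit into an order-$(k{-}1)$ object, and your $k=1$ instance already fails. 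If you instead promote configurations to order $k$, the least fixpoint computing the reachable configurations is a predicate on order-$k$ objects, hence an order-$(k{+}1)$ object, and the formula leaves $\hfl{k}$. The standard repair is to trade time for alternation: reduce from an alternating $(k{-}1)$-fold-exponentially space-bounded machine (alternating $(k{-}1)$-\ExpSpace equals $k$-\ExpTime). Its configurations are $(k{-}1)$-fold-exponentially long and do fit into order-$(k{-}1)$ objects via your addressing scheme, and the alternating acceptance condition --- the set of accepting configurations --- is naturally a least fixpoint of a predicate on configurations, i.e., an order-$k$ fixpoint converging within the $k$-fold-exponential height of that lattice. With that substitution your plan closes; without it, the counting is off by one exponential on one side or the other.
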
 
The corresponding result for $\phfl{k}{d}$, first stated without proof in \cite{conf/ifipTCS/LangeL14}, follows
via a reduction:
\begin{theorem}
\label{thm:upper}
Let $k,d \ge 1$. The model checking problem for $\phfl{k}{d}$ is in $k$-\ExpTime.
\end{theorem}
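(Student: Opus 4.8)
The statement is that model checking $\phfl{k}{d}$ is in $k$-\ExpTime. Since Proposition~\ref{prop:hflmodelcheck} already gives us that model checking $\hfl{k} = \phfl{k}{1}$ is $k$-\ExpTime-complete, the natural strategy is a reduction: given an LTS $\Transsys$, a tuple $\bar{s} = (s_1,\dotsc,s_d)$ and a $\phfl{k}{d}$ formula $\varphi$, construct an LTS $\Transsys^d$ and an $\hfl{k}$ formula $\hat{\varphi}$ such that $\Transsys, \bar{s} \models \varphi$ iff $\Transsys^d, \hat{s} \models \hat{\varphi}$ for a suitable single state $\hat{s}$ of $\Transsys^d$, with both $\Transsys^d$ and $\hat{\varphi}$ computable in time polynomial in $|\Transsys|$, $|\varphi|$ and $d$ (here $d$ and $k$ are fixed, so even an exponential dependence on $d$ would be acceptable, but polynomial is cleaner). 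This is the standard "polyadic-to-monadic" reduction of \cite{LL-FICS12}, lifted from the $\mu$-calculus to the higher-order setting; I would cite that construction and describe how it extends.

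First I would define $\Transsys^d$: its state set is $\States^d$ (the $d$-tuples of states of $\Transsys$), which blows up the LTS by a polynomial factor since $d$ is fixed. The transitions encode the "replace the $i$th component by an $a$-successor" operation: for each $a \in \Actions$ and each $i \in [d]$ introduce a fresh action $a_i$ with $\Transition{\bar{t}}{a_i}{\bar{t}[\statealt/i]}$ whenever $\Transition{t_i}{a}{\statealt}$ in $\Transsys$. The labelling of $\Transsys^d$ uses fresh propositions $p_i$ for $p \in \Prop$, $i \in [d]$, with $p_i$ true at $\bar{t}$ iff $p \in \Label(t_i)$. Then the translation $\hat{\cdot}$ on formulas is essentially the identity on the higher-order skeleton: $\widehat{\mudiam{a_i}{\psi}} = \mudiam{a_i}{\hat{\psi}}$ using the fresh action, $\widehat{p_i} = p_i$ using the fresh proposition, $\lambda$, application and $\mu$ are translated homomorphically (types carry over verbatim since the ground type of $\phfl{}{d}$ corresponds to the ground type of $\hfl{}$ once interpreted over $\States^d$), and the one genuinely new case is the index substitution $\subst{\sigma}{\psi}$. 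To handle $\subst{\sigma}{\psi}$ one introduces, for each of the finitely many maps $\sigma\colon [d]\to[d]$ that actually appear in $\varphi$, either a fresh "shuffle" action $r_\sigma$ with $\Transition{\bar{t}}{r_\sigma}{(t_{\sigma(1)},\dotsc,t_{\sigma(d)})}$, translating $\subst{\sigma}{\psi}$ to $\mudiam{r_\sigma}{\hat{\psi}}$ (the relation is deterministic and total so $\mudiam{}{}$ and $\mubox{}{}$ agree); only boundedly many such $\sigma$ occur, so the blow-up is polynomial. One checks by induction on $\varphi$ that the semantics match: $\sem{\varphi\colon\tau}{\Transsys}{\eta}$ over $\States^d$ equals $\sem{\hat{\varphi}\colon\tau}{\Transsys^d}{\hat{\eta}}$ under the evident identification of $\semm{\grtype}{\Transsys} = 2^{\States^d}$ with $\semm{\grtype}{\Transsys^d} = 2^{|\Transsys^d|}$, and of environments. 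The well-typedness of $\hat{\varphi}$ and the preservation of type-order $k$ are immediate from the homomorphic nature of the translation.

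Assembling the pieces: $|\Transsys^d| \le |\States|^d \cdot (1 + |\Actions|\cdot d + (\text{number of }\sigma\text{'s in }\varphi))$ and $|\hat{\varphi}| = \mathcal{O}(|\varphi|)$, both polynomial for fixed $d$; the reduction itself runs in polynomial time. By Proposition~\ref{prop:hflmodelcheck} the resulting $\hfl{k}$ model-checking instance is decidable in time $k$-fold exponential in $|\Transsys^d| + |\hat{\varphi}|$, hence $k$-fold exponential in $|\Transsys| + |\varphi|$, which is the claimed $k$-\ExpTime bound. (For $k = 0$ one would instead invoke the \PTime\ bound for $\phfl{0}{}$ model checking cited via Proposition~\ref{prop:otto}; the theorem as stated assumes $k \ge 1$, so Proposition~\ref{prop:hflmodelcheck} suffices.)

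**Main obstacle.** The only subtle point is the faithful treatment of index substitutions $\subst{\sigma}{\cdot}$ and their interaction with the type system and with fixpoints/variances — one must be sure the shuffle actions are deterministic and total so that $\subst{\sigma}{\cdot}$, which is both a $\mudiam{}{}$- and a $\mubox{}{}$-style operation (it is monotone and self-dual on the ground type), is correctly emulated, and that introducing them under $\neg$ does not disturb the variance bookkeeping of Fig.~\ref{fig:typing-rules}. Everything else is a routine induction mirroring the monadic case of \cite{LL-FICS12}; the higher-order layers ($\lambda$, application, higher-type $\mu$) are carried through unchanged because the reduction never touches them — it only rewrites the ground-type leaves ($p_i$, $\mudiam{a_i}{}$, $\subst{\sigma}{}$) and reinterprets the ground type over the product state space.
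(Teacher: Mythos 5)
Your proposal is correct and follows essentially the same route as the paper: both construct the $d$-product LTS $\Transsys^d$ with indexed actions $a_i$, indexed propositions $p_i$, and (deterministic, total) shuffle actions for the maps $\sigma\colon[d]\to[d]$, translate $\subst{\sigma}{\cdot}$ to a modality over the corresponding shuffle action while acting homomorphically on the higher-order skeleton, and then invoke Prop.~\ref{prop:hflmodelcheck}. The only cosmetic difference is that you introduce shuffle actions only for the $\sigma$'s occurring in $\varphi$, whereas the paper adds all $d^d$ of them; both are polynomial for fixed $d$.
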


\begin{proof}
By a polynomial reduction to the model checking problem for \HFL. Let $k,d \ge 1$ and an LTS 
$\Transsys = (\States, \{\Transition{}{a}{}\}_{a\in\Actions},\Label)$ over $\Actions$ and $\Prop$ be given. We construct
its $d$-product $\Transsys^d$ over the action set 
$\Actions^d := \{a_i \mid a \in \Actions, i \in [d] \} \cup \{ \sigma \mid \sigma: [d] \to [d] \}$ and atomic propositions
$\Prop^d := \{ q_i \mid q \in \Prop, i \in [d] \}$ as $(\States^d, \{\Transition{}{x}{}\}_{x \in \Actions'_d}, \Label')$ 
where, for all $q \in \Prop$, $i \in [d]$, $a \in \Actions$, $s_1,\ldots,s_d,t_1,\ldots,t_d \in \States$ we have
\begin{itemize}
\item $q_i \in \Label'(s_1,\ldots,s_d)$ iff $q \in \Label(s_i)$,
\item $\Transition{(s_1,\ldots,s_d)}{a_i}{(t_1,\ldots,t_d)}$ iff $\Transition{s_i}{a}{t_i}$ and $t_j=s_j$ for all $j \ne i$,
\item $\Transition{(s_1,\ldots,s_d)}{\sigma}{(t_1,\ldots,t_d)}$ iff $t_j = \sigma(s_j)$ for all $j \in [d]$.
\end{itemize}
Next, we translate a $\phfl{k}{d}$ formula $\varphi$ inductively into a $\phfl{}{1}$ formula $\widehat{\varphi}$ as follows.
The operation $\widehat{\cdot}$ acts homomorphically on all operators apart from the following three cases.
\begin{displaymath}
\widehat{\subst{\sigma}\psi} := \mudiam{\sigma}{\widehat{\psi}} \enspace, \quad 
\widehat{\mudiam{a_i}\psi} := \mudiam{a_i}{\widehat{\psi}} \enspace, \quad 
\widehat{p_i} := p_i \ . 
\end{displaymath}
The latter two cases may be confusing as $\widehat{\cdot}$ seems to not change those operators either. However, in 
e.g.\ the second case, $\mudiam{a_i}$ on the left side is a polyadic modality combining the action $a \in \Actions$ with 
the index $i \in [d]$. On the right side, $\mudiam{a_i}$ is a monadic modality over the action $a_i \in \Actions^d$. Likewise
in the third case.

This equality in syntax for two technically different modal operators, resp.\ atomic formulas is intended because of the
following connection: the $d$-tuple $(s_1,\ldots,s_d)$ of states in $\Transsys$ satisfies the $\phfl{}{d}$ formula 
$\mudiam{a_i}\psi$ iff the state $(s_1,\ldots,s_d)$ of $\Transsys^d$ satisfies the $\phfl{}{1}$ formula 
$\mudiam{a_i}{\widehat{\psi}}$. A similar statement can be made for atomic propositions and these can easily be generalised
to show by induction on the syntax of \PHFL that for all $\phfl{}{d}$ formulas $\varphi$, all 
$s_1,\ldots,s_d \in \States$ and all environments $\eta$ we have: $(s_1,\ldots,s_d) \in \sem{\varphi}{\Transsys}{\eta}$ 
iff $(s_1,\ldots,s_d) \in \sem{\widehat{\varphi}}{\Transsys^d}{\eta}$. 

This establishes correctness of the reduction. Note that $\widehat{\varphi}$ is a $\phfl{k}{1}$-, i.e.\ $\hfl{k}$ formula
whenever $\varphi \in \phfl{k}{d}$. Moreover, both $\Transsys^d$ and $\widehat{\varphi}$ are easily seen to be constructible
in polynomial time (for fixed $d$). Thus, by Prop.~\ref{prop:hflmodelcheck} model checking $\phfl{k}{d}$ is also in $k$-\ExpTime.
\end{proof}

\section{Higher-Order Quantification}
\label{sec:quantifier}

To show that every bisimulation-invariant $k$-\ExpTime query can be expressed in $\phfl{k}{}$ it suffices, due to Prop.~\ref{prop:IV-gen}, 
to show that every $\holfp{k+1}$-query can be translated into a $\phfl{k}{}$ formula. The main challenge here is to deal with existential quantification, which has no obvious equivalent 
in \PHFL. In \cite{Otto99}, first-order existential quantification is replaced by reachability of a suitable state, 
which is sufficient in the bisimulation-invariant setting. Higher-order quantification does not have such an
obvious correspondent - there is no notion of a set, or a set of sets, etc.\ being reachable. Moreover, \PHFL
does only have a type for sets of tuples of states, not for sets of sets etc. 
We solve this problem by replacing higher-order types  by a variant of their characteristic function, something
that fits quite naturally into the \PHFL world. We then lift the order on the states inherited from \cite{Otto99} 
to sets of tuples of states and then to said characteristic functions by ordering them lexicographically. 
We can enumerate sets, functions and so on alongside this order, and hence, we can mimic existential quantification
over some \HOLFP type by an enumeration of corresponding characteristic functions. 

The increased complexity of this approach compared to simple reachability requires two adaptations: first, since we use the
order of sets, functions etc.\ present in the bisimulation-invariant setting, we often have to compare two such objects
w.r.t.\ this order. Comparing e.g.\ two sets of $\dimension$-ary tuples, however, requires a formula containing types of width $2d$. 
Also, \HOLFP-formulas can define a query of some width, yet contain types, resp.\ quantification over objects of much higher
width. Hence, in order to keep the presentation simple, the formulas we develop subsequently will be of some unspecified,
yet generally quite high arity, i.e.\ they will be in $\phfl{k}{\dimension}$ for some $\dimension$ that is large compared with the width of
the original query. The exact value of $\dimension$ will be given towards the end of the translation.

Since we have already agreed to blow up the width used in our translated formulas, we can also make things easier by reserving
certain positions in the tuples we work with for special tasks. We generally use the last two positions in our tuples (those
with indices $\dimension$ and $\dimension-1$) to compare individual states w.r.t.\ the order from \cite{Otto99}, and we will use the next 
$r$ positions from the right, i.e.\ those with indices $\dimension-r-1,\dotsc, \dimension-2$ for some $r$, to keep copies of states such that the 
whole LTS is reachable from at least one of these states, in order to keep the pattern for first-order quantification valid. This will
be made formal after we revisit the pattern for existential quantification just below. We then subsequently expand quantification towards sets 
and characteristic functions. The final translation is then given in Sec.~\ref{sec:lower}.

\paragraph{Reachable States.}

In \cite{Otto99}, existential quantification of first-order logic was replaced by reachability of a suitable state in the LTS in question,
using the pattern given subsequently. First, we recall the role of the substitution operator:  
Let $\sigma^{i \leftarrow j}$ be defined by $\sigma^{i \leftarrow j}(i) = j$ and $\sigma^{i \leftarrow j}(i') = i'$ if $i' \not = i$. Then
$(\overline{\state} \in \sem{\subst{\sigma^{i \leftarrow j}}{\varphi}}{\Transsys}{\eta}$ iff 
$\overline{\state}[\state_j/i] = (\state_1,\dotsc,\state_{i-1},\state_j,\state_{i+1},\dotsc,\state_\dimension) \in \sem{\varphi}{\Transsys}{\eta}$
for all $\Transsys$ and $\eta$.

Now let $\Actions$ be a set of actions, let $\dimension > 2$, $r \leq \dimension-2$ and 
$i \leq \dimension - r - 2$. Recall that, for the time being, we assume that every tuple we work with is such that all states in an LTS
are reachable from one of the states at indices $\dimension-r-1,\dotsc,\dimension-2$ of the tuple and that we reserve the last two positions
for comparisons (see below).  Consider the formula
\[
\exists_i \varphi \coloneqq \bigvee_{j = \dimension -r -1}^{\dimension-2} \subst{\sigma^{i\leftarrow j}}{\big(\mu (X\colon\grtype).\ \varphi \vee \bigvee_{a \in \Actions} \mudiam{a_i}{X}\big)}.
\]
for some $\varphi \in \phfl{}{\dimension}$. We have the following:
\begin{observation}
\label{obs:ind-quant}
Let $\Transsys$ be an LTS over $\Actions$ and let $\state = (\state_1,\dotsc,\state_\dimension)$ such that all states in $\Transsys$ 
are reachable from at least one state in
 $\state_{\dimension-r-1},\dotsc,\state_{\dimension-2}$. Then $\Transsys, \overline{\state} \models \exists_i \varphi$ iff 
there is a state $\statealt$ in $\Transsys$ such that $\Transsys, \overline{\state}[\statealt/i] \models \varphi$. 
\end{observation}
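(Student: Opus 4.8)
The plan is to prove Observation~\ref{obs:ind-quant} by analysing the two ingredients of the formula $\exists_i\varphi$ separately: the inner least fixpoint, which performs an unbounded search along $\Actions$-transitions in component~$i$, and the outer disjunction over the reserved ``root'' indices $\dimension-r-1,\dotsc,\dimension-2$, together with the index substitution that feeds each such root into position~$i$. First I would pin down the semantics of the inner fixpoint. Writing $\psi(X) := \varphi \vee \bigvee_{a\in\Actions}\mudiam{a_i}{X}$, a routine induction on the approximants of $\mu X.\psi(X)$ shows that $\overline{\state}\in\sem{\mu X.\psi(X)}{\Transsys}{\eta}$ iff there is a finite sequence $\state_i = u_0, u_1,\dotsc, u_m$ with $u_{\ell}\newarrow{a_{\ell}}u_{\ell+1}$ for suitable $a_{\ell}\in\Actions$ (i.e.\ $u_m$ is reachable from $\state_i$) such that $\overline{\state}[u_m/i]\in\sem{\varphi}{\Transsys}{\eta}$. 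The forward direction of this equivalence uses that each approximant only adds tuples whose $i$th component reaches, in one more step, a component already accepted; the backward direction unfolds the fixpoint $m$ times and applies the $\mudiam{a_i}$ semantics step by step, using the fixpoint unfolding principle $\mu X.\psi \equiv \psi[\mu X.\psi/X]$ recalled in the preliminaries.

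Next I would combine this with the substitution operator, using exactly the identity recalled just before the statement: $\overline{\state}\in\sem{\subst{\sigma^{i\leftarrow j}}{\chi}}{\Transsys}{\eta}$ iff $\overline{\state}[\state_j/i]\in\sem{\chi}{\Transsys}{\eta}$. Applying this with $\chi = \mu X.\psi(X)$ and the characterisation above, the $j$th disjunct of $\exists_i\varphi$ holds at $\overline{\state}$ iff there is a state $u$ reachable from $\state_j$ with $\overline{\state}[u/i]\in\sem{\varphi}{\Transsys}{\eta}$. Taking the disjunction over $j = \dimension-r-1,\dotsc,\dimension-2$, we get that $\Transsys,\overline{\state}\models\exists_i\varphi$ iff there is some $j$ in that range and some $u$ reachable from $\state_j$ with $\Transsys,\overline{\state}[u/i]\models\varphi$.

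It then remains to match this with the claimed statement, namely the existence of \emph{any} state $\statealt$ of $\Transsys$ with $\Transsys,\overline{\state}[\statealt/i]\models\varphi$. One inclusion is immediate, since any state reachable from some $\state_j$ is in particular a state of $\Transsys$. The other inclusion is precisely where the hypothesis of the observation is used: by assumption every state $\statealt$ of $\Transsys$ is reachable from at least one of $\state_{\dimension-r-1},\dotsc,\state_{\dimension-2}$, so a witnessing $\statealt$ automatically arises as a state reachable from some reserved root $\state_j$, and the corresponding disjunct of $\exists_i\varphi$ is satisfied. This closes the equivalence.

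The only genuinely delicate point is the induction establishing the semantics of the inner fixpoint, and specifically making sure the reserved positions $\dimension,\dimension-1$ (and, in later uses, the other roots) play no role: since $\varphi$ and each $\mudiam{a_i}{X}$ only ever touch component~$i$, and $i \le \dimension-r-2$ is disjoint from all reserved indices, every approximant is ``stable'' on those coordinates, so the search genuinely ranges over states reachable in component~$i$ and the side conditions $\dimension>2$, $r\le\dimension-2$, $i\le\dimension-r-2$ guarantee the reserved block and the substitution targets are well defined and disjoint from $i$. Monotonicity of $\psi$ in $X$ (hence existence of the least fixpoint by Knaster--Tarski) is clear since $X$ occurs only positively, under modalities. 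I expect the write-up of the approximant induction to be the bulk of the argument, with everything else being bookkeeping with the substitution identity.
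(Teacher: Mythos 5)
Your argument is correct and is exactly the intended justification: the paper states this as an \emph{Observation} without proof, the reasoning being precisely your decomposition into (i) the approximant-based characterisation of the inner least fixpoint as reachability in component $i$, (ii) the substitution identity $\overline{\state}\in\sem{\subst{\sigma^{i\leftarrow j}}{\chi}}{\Transsys}{\eta}$ iff $\overline{\state}[\state_j/i]\in\sem{\chi}{\Transsys}{\eta}$, and (iii) the reachability hypothesis on the reserved positions to pass from ``reachable from some $\state_j$'' to ``any state of $\Transsys$''. Nothing is missing; your attention to the fact that $i$ is disjoint from the reserved indices is the only point of care, and you handle it.
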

As said before, quantification over higher-order types is more complicated, but  we can replace reachability by enumeration in 
lexicographical order for higher-order types. Towards this, note that for all 
$\dimension \geq 2$ there is some $\phfl{0}{2}$-formula $\varphi_<$ defining a transitive and
irreflexive relation $<$ such that, for all LTS $\Transsys$ and $d$-tuples $\overline{\state} = (\state_1,\dotsc,\state_d)$ we have 
that $\Transsys, \overline{\state} \models \varphi_<$ iff $\state_{\dimension-1} < \state_\dimension$. This formula is defined in \cite{Otto99} using
a variant of the negation of the formula from Ex.~\ref{ex:bisim}. The actual position of the tuple elements that are compared is not important, we choose to fix it here for consistency.

A crucial ingredient for the correctness of the quantification pattern above is that every state in the LTS is reachable from 
the states in positions $\dimension-r-1,\dotsc,\dimension-2$. For the first-order case, this can be guaranteed by never manipulating
the components with the respective indices. In the higher-order setting, this does not suffice as one deals with arbitrary sets, functions, etc.
Hence, for the remainder of the section all formulas are assumed to have a free lambda variable $e$ of type $\grtype$ 
that is as follows:
\begin{definition}
\label{def:good}
Let $\Transsys$ be an LTS and let $r$ be fixed. Then an interpretation $\eta$ is \emph{good} if $\eta(e)$ is a set of the form $M\times\{\state_{\dimension-r-1}\}\times\dotsb\times\{\state_{\dimension-2}\}\times\States^2$ such that 
$\emptyset \not = M \subseteq \States^{\dimension-2-r}$ and each state of $\Transsys$ is reachable from one of the 
$\state_{\dimension-r-1},\dotsc,\state_{\dimension-2}$. 
\end{definition}
We do not make this free variable explicit, since it is always assumed to be there. We will see in Sect.~\ref{sec:lower} 
how this intended interpretation can be enforced. This stipulation formalises the informal idea given above. Note that
in \cite{Otto99}, it was assumed that all states of the LTS in question are reachable from a singular state, but this is not
a necessary requirement for the argument to work.

Finally, let $w$ and $\dimension$ be such that $2w+r+2 \leq \dimension$.
The intuition here is that, in order to translate from $\HOLFP$, we will have to deal with sets and higher-order sets of arity at most $w$.

\paragraph{Quantification for Sets.}

We now define a similar pattern to that for the first-order case which allows us to iterate over all sets of $w$-tuples in an LTS. 
This is not the same as enumerating $\semm{\grtype}{\Transsys}$ since $w < \dimension$.

Let $\sigma_i$ be defined via $\sigma_i(\dimension-1) = i$, $\sigma_i(\dimension) = i+w$ and $\sigma_i(j) = j$
if $j < \dimension-1$.  Then $\overline{\state} \in \sem{\subst{\sigma_i}{\varphi}}{\Transsys}{\eta}$ iff
$\overline{\state}[\state_i/\dimension-1, \state_{i+w}/\dimension] \in \sem{\varphi}{\Transsys}{\eta}$. The intended use for
this substitution is to compare the elements at indices $i$ and $i+w$ w.r.t.\ to the order induced by $\varphi_<$. Remember that this
formula always compares the last two elements of the tuple. Moreover, let $\sigma_{\rightarrow w}$ be defined by 
$\sigma_{\rightarrow w}(i) = w+i$ for $i \leq w$ and $\sigma_{\rightarrow w}(j) = j$ for $j > w$. 
Then $(\state_1,\dotsc,\state_\dimension) \in \sem{\subst{\sigma_{\rightarrow w}}{\varphi}}{\Transsys}{\eta}$ iff
$(\state_1,\dotsc,\state_w,\state_1,\dotsc,\state_w,\state_{2w+1},\dotsc,\state_\dimension) \in \sem{\varphi}{\Transsys}{\eta}$. 
The intended use here is to shift the first $w$ elements of a formula to the right to make room for another tuple at the first $w$
positions such that the two tuples can be compared lexicographically.

Consider the formula $\exists^{(w)} x.\varphi \coloneqq \big(\mu (F\colon \grtype \to \grtype).\, \lambda (x\colon\grtype).\, \varphi \vee F(\nxt^{(w)}(x))\big) \myfalse$ where 
\begin{align*}
\varphi_<^w          &\coloneqq \bigvee_{i=1}^w \subst{\sigma_i}{\varphi_<} \wedge \bigwedge_{j=1}^{i-1} \subst{\sigma_j}{\neg \varphi_<} \\
\varphi_<^{(w)}(x,y) &\coloneqq \exists_{1}\dotsc\exists_{w}.\ y \wedge \neg x
                        \wedge \subst{\sigma_{\rightarrow w}}{\big(\forall_{1}\dotsc\forall_{w}. \varphi_<^w \rightarrow x \rightarrow y\big)}. \\
\nxt^{(w)}(x)        &\coloneqq \lambda(x \colon \grtype).\ e \wedge \neg x \wedge \subst{\sigma_{\rightarrow w}}{\big(\forall_{1}\dotsc\forall_{w}.\ \varphi_<^{w} \rightarrow x\big)} \\
                     & \qquad \qquad \;\;\;\; \vee e\wedge x \wedge \subst{\sigma_{\rightarrow w}}{\big(\exists_1\dotsc\exists_w.\ \varphi_<^{w} \wedge \neg x\big)} 
\end{align*}
The first formula $\varphi_<^w$ implements lexicographical comparison of the first $w$ elements in a tuple to the second $w$ elements. The second formula $\varphi_<^{(w)}$ lifts the order from tuples to sets of tuples via the lexicographical order induced by the order on the tuples.
Finally, the formula $\nxt^{(w)}$ is a predicate transformer that consumes a set of tuples. It returns a set of tuples that is the lexicographical successor of the input in the order induced by the order on the first $w$ elements of the individual tuples: the output contains a tuple iff either the input does contain it, but not all lexicographically smaller tuples, or if the input does not contain it, but all lexicographically smaller tuples. Also note the role of $e$ that filters out
all tuples that do not adhere to our stipulation that the whole LTS be reachable from one of the states at indices $\dimension-r-1,\dotsc,\dimension-2$.
\begin{lemma}
\label{lem:quantifier-sets}
Let $\Transsys$ be an LTS with state set $\States$ and let $\eta$ be good. Then $\Transsys,\overline{\state} \models_\eta \exists^{(w)} x.\, \varphi$ 
iff there is $M \subseteq \States^w$ such that $\Transsys, \overline{\state} \models_{\eta[x\mapsto M \times \States^{\dimension - w} \cap \eta(e)]} \varphi$.
\end{lemma}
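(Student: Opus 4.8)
The plan is to unfold the fixpoint $\mu(F\colon\grtype\to\grtype).\,\lambda(x\colon\grtype).\,\varphi\vee F(\nxt^{(w)}(x))$ applied to $\myfalse$ and track the sequence of arguments produced by iterating $\nxt^{(w)}$. By fixpoint unfolding (which holds for \PHFL), $\exists^{(w)}x.\,\varphi$ is equivalent, on any finite LTS, to the finite disjunction $\bigvee_{m\ge 0}\varphi[\, (\nxt^{(w)})^m(\myfalse)\, /x]$; finiteness of $\States$ guarantees this stabilises. So the core of the argument is a combinatorial claim: letting $M_0 = \emptyset$ and $M_{m+1}$ be the set denoted by $\nxt^{(w)}$ applied to (the \PHFL-object coding) $M_m$, the sequence $M_0, M_1, M_2, \dotsc$ enumerates exactly all subsets of $\States^w$ — or rather, their images $M\times\States^{\dimension-w}\cap\eta(e)$ inside $\semm{\grtype}{\Transsys}$ — each exactly once, in the lexicographic order induced by $\varphi_<^w$ on the first $w$ coordinates. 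Granting that, $\Transsys,\overline\state\models_\eta\exists^{(w)}x.\,\varphi$ iff $\varphi$ holds under some $\eta[x\mapsto M_m]$ iff $\varphi$ holds under some $\eta[x\mapsto M\times\States^{\dimension-w}\cap\eta(e)]$, which is the statement.

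First I would establish the correctness of the auxiliary formulas in isolation. For $\varphi_<^w$: using the already-proved behaviour of $\varphi_<$ and of the substitutions $\sigma_i$, show that $\overline\state\models\varphi_<^w$ iff $(\state_1,\dotsc,\state_w)$ is lexicographically strictly below $(\state_{w+1},\dotsc,\state_{2w})$ with respect to the order $<$ from \cite{Otto99}; this is a routine induction on the witnessing coordinate $i$, the inner conjunction $\bigwedge_{j<i}\subst{\sigma_j}{\neg\varphi_<}$ forcing agreement on the first $i-1$ positions. Then, for $\nxt^{(w)}(x)$: using Observation~\ref{obs:ind-quant} (which applies because $\eta$ is good, so the LTS is reachable from positions $\dimension-r-1,\dotsc,\dimension-2$, and goodness is preserved since $\nxt^{(w)}$ only ever intersects with $e$), unpack the two disjuncts. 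A tuple $\overline\state$ lies in $\nxt^{(w)}(x)(M)$ iff $\overline\state\in\eta(e)$ and exactly one of: (a) $\overline\state\notin M$ but every lexicographically smaller $w$-prefix is in $M$; (b) $\overline\state\in M$ but some lexicographically smaller $w$-prefix is not in $M$. Translating "being in $M$" correctly requires the $e$-intersection bookkeeping: a tuple $\overline\state$ represents membership of its $w$-prefix in the underlying subset precisely when $\overline\state\in M\cap\eta(e)$, and the $\sigma_{\rightarrow w}$ shift is what lets $\forall_1\dotsc\forall_w.\,\varphi_<^w\rightarrow x$ quantify over the "smaller" prefixes while keeping $\overline\state$'s own prefix pinned in positions $w+1,\dotsc,2w$.

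Next I would prove the enumeration claim by induction on $m$, with the invariant: $M_m\cap\eta(e)$ equals $N_m\times\{\state_{\dimension-r-1}\}\times\dotsb\times\States^2$ where $N_m\subseteq\States^{\dimension-2-r}$... — more precisely, working at the level of the underlying subsets $P_m\subseteq\States^w$ of the first $w$ coordinates, that $P_m$ is the $m$-th subset of $\States^w$ in the lexicographic-on-characteristic-vectors order determined by $<$. The base case $P_0=\emptyset$ is immediate. For the step, the case analysis (a)/(b) above is exactly the "binary increment with carry" on the characteristic vector of $P_m$ (indexed by $w$-tuples in $<$-order): case (a) is setting a bit from $0$ to $1$ when all lower bits are $1$ (the position where the carry stops), case (b) is clearing a bit from $1$ to $0$ when some lower bit is $0$ (the bits the carry passes through). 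Hence $P_{m+1}$ is the successor of $P_m$, and once $P_m=\States^w$ the formula $\nxt^{(w)}$ returns (the $e$-restriction of) the same set, so the disjunction has indeed enumerated every subset.

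The main obstacle I anticipate is purely notational rather than conceptual: keeping straight the three layers — abstract subsets of $\States^w$, their padded images inside $\semm{\grtype}{\Transsys}=2^{\States^{\dimension}}$, and the further $\eta(e)$-restriction — and verifying that each substitution ($\sigma_i$, $\sigma_{\rightarrow w}$) and each nested $\exists_j/\forall_j$ block acts on the correct coordinate block without disturbing the reserved positions $\dimension-r-1,\dotsc,\dimension$. In particular one must check that applying $\nxt^{(w)}$ never produces a tuple outside $\eta(e)$ (so goodness of $\eta[x\mapsto\cdot]$ is maintained throughout the iteration, keeping Observation~\ref{obs:ind-quant} applicable), and that the lexicographic order $\varphi_<^{(w)}$ on sets is genuinely the one whose successor function is $\nxt^{(w)}$ — this consistency is needed so that "enumerates each subset exactly once" is not merely "enumerates all subsets." Once the bookkeeping is fixed, the argument reduces to the elementary fact that binary incrementing a length-$|\States^w|$ counter from the all-zeros vector visits every bit-vector exactly once.
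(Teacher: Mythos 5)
Your proposal is correct and follows exactly the route the paper intends: the paper's own ``proof'' is the single sentence that one must verify the informal intuition preceding the lemma (fixpoint unfolding yields a disjunction over the iterates of $\nxt^{(w)}$, which implements binary increment on the characteristic vector ordered by $\varphi_<^w$ and hence enumerates all the relevant subsets). Your write-up supplies the bookkeeping details the paper omits, so it is a faithful, more explicit version of the same argument.
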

The proof consists of verifying the informal intuition above.
We write $\forall^{(w)} x.\ \varphi$ to denote $\neg \exists^{(w)} x.\ \neg \varphi$. We write $\exists^{(w)} x_1,\dotsc,x_n.\ \varphi$
for $\exists^{(w)} x_1 \dotsb \exists^{(w)} x_n.\ \varphi$, and similarly for $\forall^{(w)} x.\ \varphi$.

\paragraph{Generalised Higher-Order Quantification.}

We have just seen 
how existential quantification can be emulated
for individual states in an LTS and, with some restrictions, for sets of $w$-tuples of an LTS.  
For other types that commonly appear in \HOLFP, i.e.\ relations of higher order, there is no
immediate \PHFL equivalent, since all types beyond $\grtype$ are function types. However, we can 
use these function types to emulate the \HOLFP types to a sufficient degree. For the sake of simplicity,
we only consider types of a special form; we argue in Sect.~\ref{sec:lower} why this is not a restriction.

Let $\tau_{w,k}$ be inductively defined via $\tau_{w,0} = \grtype$ and $\tau_{w,i+1} = \tau_{w,i} \to \dotsb \to \tau_{w,i} \to \grtype$
where $\tau_{w,i}$ is repeated $w$ many times. Given $\Transsys$, let $\semm{\tau^\circ_{w,i}}{\Transsys} = \semm{\tau_{w,i}}{\Transsys}$
for $i \leq 1$ and let 
\[
\semm{\tau^\circ_{w,k}}{\Transsys} = 
\{f \in \semm{\tau_{w,k}}{\Transsys} \mid f(f_1,\dotsc,f_w) = \States \text{ or } f(f_1,\dotsc,f_w) = \emptyset 
\text{ f.a.\ } f_1,\dotsc,f_w \in \semm{\tau^\circ_{w,k-1}}{\Transsys} \}\]
for $k \geq 2$. The important distinction here is that $\semm{\tau^\circ_{w,k}}{\Transsys}$ is the 
restriction of $\semm{\tau_{w,k}}{\Transsys}$ to those functions that always return  either the
full set of states or the empty set, at least on inputs from $\semm{\tau^\circ_{w,k-1}}{\Transsys}$. This 
is desirable since we want to use functions in $\semm{\tau^\circ_{w,k}}{\Transsys}$ to emulate higher-order variables
of a special form. Given $x, x_1,\dotsc,x_w$ of the appropriate type, the question whether 
$\overline{s} \in \sem{x(x_1,\dotsc,x_w)}{\Transsys}{\eta}$ 
does not depend on $\overline{s}$ (as in modal logics), but is uniform over the while LTS. However,
since $\semm{\tau_{w,k}}{\Transsys}$ also contains functions that are not uniform starting from \PHFL-order $2$, 
we restrict ourselves to functions that are uniform on the necessary inputs (i.e.\ those that are themselves sufficiently uniform).

Consider the following formulas for $k \geq 1$, where $\overline{x} = x_1,\dotsc,x_w$ and 
$\overline{y} = y_1,\dotsc,y_w$:
\begin{align*}
\varphi_<^{w,k-1}(x_1,\dotsc,x_w,y_1,\dotsc,y_w) &\coloneqq \bigvee_{i = 1}^w \varphi_<^{(w),k-1}(x_i, y_i)
                                                           \wedge \bigwedge_{j = 1}^{i-1} \neg \varphi_<^{(w),k-1}(x_j,y_j) \\
\varphi_<^{(w),k}(x,y)  &\coloneqq \exists^{w,k-1} \overline{x}.\ y(\overline{x}) \wedge \neg x(\overline{x}) \\
                                    &\; \wedge \forall^{w,k-1} \overline{y}.\ \varphi_<^{w,l}(\overline{y},\overline{x}) \rightarrow
																						    x(\overline{y}) \rightarrow y(\overline{y}) \\
\myfalse_{(w),k} & \coloneqq \lambda (\overline{x}\colon\tau_{w,k-1}).\, \myfalse \\
\nxt^{w,k}(x) &\coloneqq \lambda (x\colon\tau_{w,k}).\  \lambda (\overline{x}\colon\tau_{w,k-1}).\  \\
& \big( \neg x(\overline{x}) \wedge \forall^{w,k} \overline{y}.\ 
\varphi_<^{w,k}(\overline{y},\overline{x}) \rightarrow x(\overline{y})\big) \\
& \vee \big(  x(\overline{x}) \wedge \exists^{w,k} \overline{y}.\ 
\varphi_<^{w,k}(\overline{y},\overline{x}) \wedge \neg x(\overline{y})\big) \\
\exists^{w,k}(x).\ \varphi &\coloneqq \big( \mu (F\colon \tau_{w,k} \to \tau_{w,k}).\ \lambda (x\colon\tau_{w,k}).\, \varphi \vee F(\nxt^{w,k} x)\big)\, \myfalse_{(w),k}                                          
\end{align*}
where $\varphi_<^{(w),k-1} =  \varphi_<^{(w)}$ in case $k = 1$ and $\exists^{w,k-1} x. \varphi = \exists^{(w)} x. \varphi$ if $k=1$. 

Similarly as in the definitions given before Lemma~\ref{lem:quantifier-sets}, these formulas lift quantification up by one level on 
the type hierarchy. The formula $\varphi_<^{w,k-1}$ compares width-$w$-tuples of functions of type $\tau^\circ_{w,k-1}$ lexicographically
using the previously defined formula $\varphi_<^{(w),k-1}$ that compares individual such functions. The formula $\varphi_<^{(w),k}$ then
lifts this to individual functions of the next type, using existential and universal quantification. The formula $\nxt^{w,k}$ again consumes
a function of type $\tau^\circ_{w,k}$ and returns the lexicographically next one using the standard definition of
binary incrementation, while $\exists^{w,k}$ implements existential quantification for $\tau^\circ_{w,k}$ by iterating
through all possible candidates using $\nxt^{w,k}$.

\begin{lemma}
\label{lem:quantifier-ho}
Let $\Transsys$ be an LTS and let $\eta$ be good. Then $\Transsys,\overline{\state} \models_\eta \exists^{w,k} x.\, \varphi$ 
iff there is $f \in \semm{\tau^\circ_{w,k}}{\Transsys}$ such that $\Transsys, \overline{\state} \models_{\eta[x\mapsto f]} \varphi$.
\end{lemma}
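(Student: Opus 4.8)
The plan is to prove Lemma~\ref{lem:quantifier-ho} by induction on $k\ge 1$, carrying along three mutually dependent invariants about the level-$k$ auxiliary formulas, all stated under the standing hypothesis that $\eta$ is good (a property preserved by every update $\eta[x\mapsto f]$ occurring below, since it constrains only $\eta(e)$), and using that $\Transsys$ is finite so that each $\semm{\tau^\circ_{w,k}}{\Transsys}$ is a finite set. The invariants are: ($\mathrm{O}_k$) the formula $\varphi_<^{(w),k}$ denotes a strict linear order $\prec_k$ on $\semm{\tau^\circ_{w,k}}{\Transsys}$, its tuple-wise version $\varphi_<^{w,k}$ denotes the induced lexicographic order on $w$-tuples, and $\myfalse_{(w),k}$ is the $\prec_k$-least element; ($\mathrm{N}_k$) $\nxt^{w,k}$ denotes the total function on $\semm{\tau^\circ_{w,k}}{\Transsys}$ sending every non-maximal object to its immediate $\prec_k$-successor and the $\prec_k$-maximum back to $\myfalse_{(w),k}$; and ($\mathrm{E}_k$) the statement of the lemma itself. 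From ($\mathrm{O}_k$) and ($\mathrm{N}_k$) one immediately reads off the key fact $\{(\nxt^{w,k})^i(\myfalse_{(w),k})\mid i\ge 0\}=\semm{\tau^\circ_{w,k}}{\Transsys}$. The base case $k=1$ is covered by the ingredients already at hand: $\exists^{w,0}=\exists^{(w)}$ is Lemma~\ref{lem:quantifier-sets}, and the order on its domain together with its lexicographic $w$-slot refinement is the formula $\varphi_<$ of~\cite{Otto99} recalled around Ex.~\ref{ex:bisim} (and the $\varphi_<^w$ above it); hence ($\mathrm{O}_1$), ($\mathrm{N}_1$), ($\mathrm{E}_1$) follow by specialising the arguments of the inductive step.

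For the inductive step assume ($\mathrm{O}_{k-1}$), ($\mathrm{N}_{k-1}$), ($\mathrm{E}_{k-1}$). For ($\mathrm{O}_k$): by ($\mathrm{O}_{k-1}$) the domain $\semm{\tau^\circ_{w,k-1}}{\Transsys}$ is linearly ordered by $\prec_{k-1}$, hence so is the set of $w$-tuples over it under the lexicographic lift; an object of $\semm{\tau^\circ_{w,k}}{\Transsys}$ takes, by its very definition, only the values $\States$ (``true'') and $\emptyset$ (``false'') on such $w$-tuples, so these objects correspond bijectively to bit-vectors over that linearly ordered index set, and one verifies that $\varphi_<^{(w),k}(x,y)$ expresses exactly ``there is a least (in that lexicographic order) $w$-tuple on which $x$ and $y$ differ, and there $x$ is false while $y$ is true'' -- which is the order on bit-vectors that is linear with the all-false vector $\myfalse_{(w),k}$ as least element -- where the quantifiers $\exists^{w,k-1},\forall^{w,k-1}$ and the comparison $\varphi_<^{w,k-1}$ occurring inside are legitimate by ($\mathrm{E}_{k-1}$) and ($\mathrm{O}_{k-1}$). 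For ($\mathrm{N}_k$): unwinding the two disjuncts of $\nxt^{w,k}(x)(\overline{x})$ shows its output is true at $\overline{x}$ iff either $x$ is false at $\overline{x}$ and true at every strictly smaller tuple, or $x$ is true at $\overline{x}$ and false at some strictly smaller one -- precisely the $\overline{x}$-th bit of the binary increment of the bit-vector of $x$; thus $\nxt^{w,k}$ computes the $\prec_k$-successor away from the maximum, returns the all-false object on the maximum, and in particular never leaves $\semm{\tau^\circ_{w,k}}{\Transsys}$ because its values are again among $\{\States,\emptyset\}$. For ($\mathrm{E}_k$): the unfolding principle for least fixpoints yields, for any argument $g$,
\[
\sem{\mu F.\,\lambda x.\,\varphi\vee F(\nxt^{w,k}x)}{\Transsys}{\eta}(g)\;=\;\bigcup_{i\ge 0}\; \sem{\varphi}{\Transsys}{\eta[x\mapsto(\nxt^{w,k})^i(g)]},
\]
and instantiating $g=\myfalse_{(w),k}$ and invoking the key fact from ($\mathrm{O}_k$), ($\mathrm{N}_k$), the right-hand side becomes $\bigcup_{f\in\semm{\tau^\circ_{w,k}}{\Transsys}}\sem{\varphi}{\Transsys}{\eta[x\mapsto f]}$; hence $\Transsys,\overline{\state}\models_\eta\exists^{w,k}x.\,\varphi$ iff $\overline{\state}\in\sem{\varphi}{\Transsys}{\eta[x\mapsto f]}$ for some $f\in\semm{\tau^\circ_{w,k}}{\Transsys}$, as claimed.

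I expect the main obstacle to be the bookkeeping needed to keep this simultaneous induction sound, rather than any single computation. In particular one has to confirm that none of the level-$k$ auxiliary formulas secretly refers to level-$k$ quantification or comparison (so that the induction is genuinely well-founded), that $\myfalse_{(w),k}$ is the $\prec_k$-least element of the \emph{restricted} space $\semm{\tau^\circ_{w,k}}{\Transsys}$ and not merely of the full function lattice $\semm{\tau_{w,k}}{\Transsys}$, and that $\nxt^{w,k}$ respects the uniformity (``$\circ$'') condition throughout the iteration. The one genuinely computational point inside this is the correctness of $\nxt^{w,k}$ as binary incrementation along $\prec_{k-1}$, in particular the wrap-around at the $\prec_k$-maximum, which is exactly what makes the $\mu$-iteration in $\exists^{w,k}$ exhaust all of $\semm{\tau^\circ_{w,k}}{\Transsys}$.
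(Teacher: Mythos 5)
Your proof takes essentially the same route as the paper: the paper's own proof of this lemma is the single sentence that one ``verifies that the individual formulas do what is claimed'', and your simultaneous induction on $k$ --- linearity of the order, correctness of $\nxt^{w,k}$ as incrementation, exhaustiveness of the $\mu$-unfolding from $\myfalse_{(w),k}$ --- is exactly that verification spelled out, including the side conditions (finiteness of $\Transsys$, preservation of goodness and of the uniformity condition, well-foundedness of the level indices) that the paper leaves implicit. Two caveats, neither of which the paper's one-line proof addresses either: (i) as literally written, $\varphi_<^{(w),k}$ compares bit-vectors at their $\prec_{k-1}$-\emph{least} differing position while $\nxt^{w,k}$ carries from the $\prec_{k-1}$-\emph{least} positions upward, so $\nxt^{w,k}$ is the successor function of the \emph{reverse} significance convention, not of $\prec_k$ as your invariant $(\mathrm{N}_k)$ states; this is harmless, because your key fact only needs that incrementation from the all-false vector cycles through all bit-vectors, which holds under either convention. (ii) Your key fact is stated too strongly: for $k\ge 2$ the orbit of $\myfalse_{(w),k}$ under $\nxt^{w,k}$ contains exactly one canonical function per bit-vector over $(\semm{\tau^\circ_{w,k-1}}{\Transsys})^w$, whereas $\semm{\tau^\circ_{w,k}}{\Transsys}$ also contains functions with arbitrary behaviour off those inputs, so the orbit is a proper subset; the ``if'' direction of the lemma therefore additionally requires that $\varphi$ cannot distinguish two members of $\semm{\tau^\circ_{w,k}}{\Transsys}$ that agree on $\circ$-inputs --- true for the formulas produced by the translation in Sect.~\ref{sec:lower}, but worth making explicit since the lemma is stated for arbitrary $\varphi$.
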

The proof follows the same pattern as that of Lemma~\ref{lem:quantifier-sets} by verifying that the individual formulas 
do what is claimed above.

Lemmas~\ref{lem:quantifier-sets} and \ref{lem:quantifier-ho} justify the use of $\HOLFP$-style quantification symbols for individual
states, sets of type $M \times \States^{\dimension-w}\cap\eta(e)$ for $M \subseteq \States^s$ and for $\tau^\circ_{w,k}$ for all $k \geq 2$. 
Note that the latter do neither coincide with \HOLFP types nor with the respective $\tau_{w,k}$.

\section{Lower Bounds}
\label{sec:lower}

\paragraph{Homogeneous Types.}

In order to simplify the translation from \HOLFP to \PHFL, we restrict the set of types 
that can be used in \HOLFP formulas. Let $w\geq 2$ be fixed but arbitrary. 
Define $\tau'_{w,k}$ as $\tau'_{w,1} = \indtype$, $\tau'_{w,i+1} = (\tau'_{w,i},\dotsc,\tau'_{w,i})$ with
$w$ many repetitions of $\tau'_{w,i}$. 
\begin{lemma}
\label{lem:homo}
If $\varphi \in \holfp{k}$ defines a query $Q$, then there is $\varphi' \in \holfp{k}$ that defines 
the same query, but the only types used in $\varphi'$ are $\tau'_{w,0},\dotsc,\tau'_{w,k}$ for some $w$.
\end{lemma}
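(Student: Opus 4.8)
The idea is to encode each non-homogeneous type by a homogeneous one of sufficient width and then rewrite the formula accordingly. First I would fix $w$ to be at least the maximal arity of any relational type occurring in $\varphi$ (and at least $2$). The key observation is that a type $\tau' = (\sigma'_1,\dotsc,\sigma'_m)$ of order $j+1$, where each $\sigma'_i$ has order at most $j$, can be simulated by the homogeneous type $\tau'_{w,j+1} = (\tau'_{w,j},\dotsc,\tau'_{w,j})$: pad the argument tuple out to width $w$ (the extra $w-m$ arguments are simply ignored), and embed each $\sigma'_i$ into $\tau'_{w,j}$ recursively. Since $\semm{\indtype}{\Transsys} = \States$ at every level, an order-$j$ object over the original signature can be viewed as an order-$j$ object of the homogeneous type by restricting attention to a fixed ``canonical'' way of filling the unused coordinates (e.g.\ repeating the first component, or using a distinguished dummy element). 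I would make this precise by defining, for each type $\tau'$ appearing in $\varphi$, an embedding $\iota_{\tau'}\colon \semm{\tau'}{\Transsys}\hookrightarrow\semm{\tau'_{w,\ORDHO(\tau')-1}}{\Transsys}$ together with a retraction, and checking that membership, application, existential quantification, and the $\lfp$ operator are all preserved under this embedding.

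The translation $\varphi\mapsto\varphi'$ then proceeds by induction on the structure of $\varphi$, replacing every bound variable of type $\tau'$ by one of type $\tau'_{w,k}$ for the appropriate $k = \ORDHO(\tau')-1\le \ORDHO(\varphi)-1$, and adjusting each syntactic construct: atomic formulas $p(X)$ and $a(X,Y)$ are untouched (individuals stay individuals); an application $X(Y_1,\dotsc,Y_m)$ becomes $X(Y_1,\dotsc,Y_m,Y_1,\dotsc,Y_1)$ with $w-m$ dummy copies of $Y_1$ (any fixed padding works, as long as it is used consistently in the matching $\lfp$-binder); a quantifier $\exists(X\colon\tau')$ becomes $\exists(X\colon\tau'_{w,k})$, where one must additionally relativise to the image of $\iota_{\tau'}$ — but since the padded coordinates are never inspected, quantifying over the larger type yields exactly the same witnesses modulo the embedding, so no explicit relativising predicate is needed provided the padding convention is fixed; the $\lfp$ case is handled by the same substitution, using that the monotone operator defined by $\varphi'$ on $\semm{\tau'_{w,k}}{\Transsys}$ has the operator defined by $\varphi$ as its restriction along $\iota$, so the least fixpoints correspond. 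A straightforward induction on subformulas then shows that for all variable assignments $\alpha$, $\Transsys,\alpha\models\varphi$ iff $\Transsys,\alpha'\models\varphi'$, where $\alpha'(X) = \iota_{\tau'}(\alpha(X))$; applied to the free variables $X_1,\dotsc,X_d$ of type $\indtype$ (which are unchanged), this gives that $\varphi'$ defines the same query $Q$, and by construction only the types $\tau'_{w,0},\dotsc,\tau'_{w,k}$ appear, with $k\le\ORDHO(\varphi)-1$ so $\varphi'\in\holfp{k}$.

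The main obstacle is getting the padding-and-embedding bookkeeping to commute cleanly with the fixpoint operator: one has to ensure that after padding the argument lists, the operator whose least fixpoint is taken in $\varphi'$ still maps the embedded copy of $\semm{\tau'}{\Transsys}$ into itself and agrees there with the original operator, so that $\LFP$ in the homogeneous world restricts to $\LFP$ in the original — this needs the padding convention to be chosen uniformly and used identically in the $\lfp$-binder and in every application site of the bound variable. A second, more minor, subtlety is that collapsing all argument types of a relation to a common homogeneous type forces the arguments $\sigma'_1,\dotsc,\sigma'_m$ of a heterogeneous relational type (which may have differing orders) to all be embedded into $\tau'_{w,j}$ for the maximal order $j$ among them; this is fine because an object of order $<j$ can be lifted to order $j$ (e.g.\ as a ``constantly-true-or-false'' relation of the right shape, exactly as the $\tau^\circ_{w,k}$ construction in Sect.~\ref{sec:quantifier} anticipates), but one must carry along the corresponding retraction to interpret the original application correctly. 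Neither obstacle is deep, but both require the embedding/retraction pair to be defined once and for all at the start and then threaded consistently through every case of the induction.
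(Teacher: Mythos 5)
Your proposal follows essentially the same two-step strategy as the paper's proof: first pad all relational types to a uniform width $w$ (the paper does this by constraining/duplicating the extra components at quantifiers, you by a fixed padding convention at application sites — a cosmetic difference), and then homogenise heterogeneous component orders by lifting lower-order objects to higher-order ones (the paper via singleton sets $(M,\dotsc,M)$, you via the analogous constant-relation embedding), chaining as needed. Your treatment is somewhat more explicit about the embedding/retraction bookkeeping and the fixpoint case than the paper's terse argument, but it is the same proof; only watch the off-by-one in your index $k=\ORDHO(\tau')-1$, since $\ordho{\tau'_{w,i}}=i$.
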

\begin{proof}
There are two principles that are used here: If $\varphi$ contains a type of the form $\tau'' = (\tau',\dotsc,\tau')$
with $w' \leq w$ many repetitions of $w$, we can replace $\tau''$ by $(\tau',\dotsc,\tau')$ with exactly
$w$ many repetitions of $\tau'$ by changing the respective type everywhere in the formula and requiring
at quantifiers that, e.g., the last $w-w'$ components are equal to the $w-w'-1$st in every tuple contained in a set.
Hence, every type can be assumed to have width exactly $w$.

It remains to deal with inhomogeneous types, e.g.\ those of the form $(\tau',\tau'',\dotsc)$ such that $\ordho{\tau'} \not = \ordho{\tau''}$.
This can be remedied by increasing the type of the lower order by one order, and requiring at quantification
steps that the only tuple in the set be a singleton of the form $(M,\dotsc,M)$ of width $w$. This procedure
needs to be chained if the orders of constituent types diverge by more than one.
\end{proof}

\paragraph{Type Correspondence.}
Now that we can assume w.l.o.g.\ that all $\holfp{k}$-definable queries are defined by an $\holfp{k}$ formula which
uses only the types $\tau'_{w,i}$ for $i \leq k$ we can observe that these types are quite similar to the types 
$\tau_{w,k}$ defined in the previous section. 
In fact,
we want to emulate the type $\tau'_{w,k}$ with $k>1$ by $\tau^\circ_{w,k-2}$. The type $\tau'_{s,1}$ will
be handled by polyadicity as in \cite{Otto99}.\footnote{It would also be possible to completely
eliminate this type from $\holfp{k}$ for $k \geq 2$; cf.\ similar constructions in the context of MSO 
and automata theory.}

Let $\Transsys$ be an LTS with state set $\States$ and let $\dimension \geq w \geq 2$. For each $k \geq 2$, we define a translation
$\SEMTRANS_k^{\Transsys}\colon\semm{\tau'_{w,k}}{\Transsys} \to \semm{\tau^\circ_{w,k-2}}{\Transsys}$ via 
$\semtrans{M}{\Transsys}{2} = M \times \States^{\dimension-w}$ and  
$\semtrans{M}{\Transsys}{i+1} = f \in \semm{\tau_{w,k-2}}{\Transsys}$ s.t.\ 
\begin{align*}
f(f_1,\dotsc,f_n) = \left\{
\begin{aligned}
\States, &\text{ if } f_j = \semtrans{x_i}{\Transsys}{i} \text{ f.a. } 0 \leq j \leq w \text{ and } (x_1,\dotsc,x_w) \in M \\
\emptyset, &\text{ otherwise.}
\end{aligned}\right.
\end{align*}
Hence, a variable assignment $\alpha$ such that all variables are of types $\tau'_{w,2},\dotsc,\tau'_{w,k}$ induces an
environment $\eta_\alpha$ with definitions for all those variables of order $\geq 2$ via $\eta_\alpha(X) = \semtrans{X}{\Transsys}{i}$
where $i$ is the order of $X$.

\paragraph{The Translation.}

We are now ready to extend the translation given in \cite{Otto99} to $\holfp{k}$ with $k \geq 2$. Towards this,
we present a syntactical translation $\SYNTRANS$ from $\holfp{k}$ with $k \geq 2$ into $\phfl{\dimension}{k-1}$ 
for some $\dimension \geq 2$.
\begin{lemma}
\label{lem:translate}
Let $\varphi \in \holfp{k}$ be bisimulation invariant and have free first-order variables $X_1,\dotsc,X_r$ and, hence define a query of width $r$. 
Moreover, let $\tau'_{w,0},\dotsc,\tau'_{w,k}$ be the only types in $\varphi$. W.l.o.g.\ let $2w \geq r$. Let
$\dimension = 2w+r+2$. Then there is $\varphi' \in \phfl{\dimension}{k-1}$ such that, for all LTS
$\Transsys$ that are a bisimulation quotient, we have 
$\Transsys, \alpha \models \varphi$ iff $\Transsys, (\alpha(X_1),\dotsc,\alpha(X_r)) \models_{\eta_\alpha[e\mapsto M]} \varphi'$
where $\emptyset \not = M' \subseteq \States^r$ and $M = M'\times \States^{\dimension-2r-2}\times\{\alpha(X_1)\}\times\dotsb\times\{\alpha(X_r)\}\times\States^2$.
\end{lemma}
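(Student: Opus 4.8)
The plan is to define the translation $\SYNTRANS$ by induction on the structure of the $\holfp{k}$ formula $\varphi$, and then prove the stated equivalence by induction as well, relying on Observation~\ref{obs:ind-quant} and Lemmas~\ref{lem:quantifier-sets} and~\ref{lem:quantifier-ho} to handle quantification, and on the type correspondence $\SEMTRANS_k^{\Transsys}$ to relate $\holfp{k}$ variable assignments $\alpha$ to \PHFL environments $\eta_\alpha$. First I would fix the index layout once and for all: with $\dimension = 2w+r+2$, indices $1,\dotsc,w$ and $w+1,\dotsc,2w$ are the two ``scratch'' blocks used for lexicographic comparisons, indices $2w+1,\dotsc,2w+r$ carry the values of the free first-order variables $X_1,\dotsc,X_r$, and indices $\dimension-1,\dimension$ are reserved for the state order $\varphi_<$. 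The extra variable $e$ of type $\grtype$ records (as in Definition~\ref{def:good}) both a nonempty witness set $M' \subseteq \States^r$ and the fact that the whole LTS is reachable from the states stored at the $X_i$-positions; since $\Transsys$ is a bisimulation quotient, reachability from the points of the pointed LTS lets us maintain the ``good'' invariant for $\eta_\alpha[e\mapsto M]$.

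The inductive definition of $\SYNTRANS$ proceeds case by case. First-order variables $X_i$ are handled by polyadicity exactly as in \cite{Otto99}: an occurrence $X_i$ is tracked by a designated tuple position. An atom $a(X_i,X_j)$ becomes the $\phfl{0}{\dimension}$ statement that the state at position $i$ has an $a$-successor bisimilar — hence, in the quotient, equal — to the state at position $j$, i.e. $\subst{\sigma^{i\leftarrow j'}}{\cdots}$ together with the $\varphi_\sim$ pattern from Example~\ref{ex:bisim}; similarly $p(X_i)$ becomes $p_i$. Boolean connectives translate homomorphically. First-order existential quantification $\exists(X\colon\indtype).\psi$ is translated using the pattern $\exists_i$ of Observation~\ref{obs:ind-quant}, moving the relevant position into the reachability block. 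For $k \geq 2$, a higher-order variable $Y$ of type $\tau'_{w,m}$ with $m \geq 2$ is represented by a \PHFL variable of type $\tau^\circ_{w,m-2}$ via $\SEMTRANS_m^{\Transsys}$; an application atom $Y(Z_1,\dotsc,Z_w)$ becomes $(\SYNTRANS(Y))(\SYNTRANS(Z_1),\dotsc,\SYNTRANS(Z_w))$, which by construction of $\SEMTRANS$ returns $\States$ or $\emptyset$ uniformly and hence is read as a truth value independent of the current tuple. Higher-order quantification $\exists(Y\colon\tau'_{w,m}).\psi$ is translated by $\exists^{w,m-2}\,y.\,\SYNTRANS(\psi)$ using the enumeration patterns of the previous section. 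Finally, a subformula $\big(\lfp(X,Y_1,\dotsc,Y_N).\psi\big)(Z_1,\dotsc,Z_n)$ is translated into a \PHFL least fixpoint $\mu$ of the appropriate type $\tau^\circ_{w,m-2}$ (or an order-$0$ fixpoint when $N$ first-order arguments are involved, exactly as in the $\holfp{1}$ case of \cite{Otto99}), applied to the translated arguments; the positivity condition on $X$ in $\holfp{k}$ matches the monotonicity requirement for \PHFL fixpoints.

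The correctness proof is then a straightforward but lengthy structural induction on $\varphi$, establishing simultaneously the invariant that whenever $\alpha$ assigns to each free higher-order variable an element in the range of $\SEMTRANS$, the environment $\eta_\alpha$ assigns the corresponding $\tau^\circ$-object, and that $e$ stays ``good''. The atomic and Boolean cases are immediate from the semantics; the quantifier cases invoke Observation~\ref{obs:ind-quant} and Lemmas~\ref{lem:quantifier-sets}, \ref{lem:quantifier-ho}, noting that the range of $\SEMTRANS_m^{\Transsys}$ is exactly $\semm{\tau^\circ_{w,m-2}}{\Transsys}$ so that quantifying over $\tau^\circ$-objects matches quantifying over $\tau'_{w,m}$-objects; the fixpoint case uses the Knaster--Tarski characterisation on both sides together with $\beta$-reduction and fixpoint unfolding, observing that $\SEMTRANS$ commutes with the respective monotone operators (this is where one checks that a single application step of the \HOLFP operator on $\semm{\tau'_{w,m}}{\Transsys}$ corresponds under $\SEMTRANS$ to one application step of the translated \PHFL functional).

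The main obstacle I expect is the fixpoint case, specifically verifying that $\SEMTRANS_m^{\Transsys}$ is a lattice embedding that intertwines the two fixpoint operators: one must check that the order $\sqsubseteq$ on $\semm{\tau^\circ_{w,m-2}}{\Transsys}$ corresponds under $\SEMTRANS_m^{\Transsys}$ to set inclusion on $\semm{\tau'_{w,m}}{\Transsys}$, that approximants are preserved, and that the translated functional is monotone on the restricted lattice $\semm{\tau^\circ}{\Transsys}$ and not merely on the full \PHFL function space — in particular that it maps $\tau^\circ$-objects to $\tau^\circ$-objects, so that the \PHFL least fixpoint, computed in the full lattice $\semm{\tau_{w,m-2}}{\Transsys}$, in fact lands in $\semm{\tau^\circ_{w,m-2}}{\Transsys}$ and agrees with the \HOLFP fixpoint. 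A secondary bookkeeping obstacle is ensuring that the comparison subformulas, which internally need width $2w$, together with the $r$ free-variable positions, the reachability block and the two order positions, all fit within the fixed arity $\dimension = 2w+r+2$ and that the various index-renaming substitutions $\sigma$ used throughout never collide; this is routine once the layout above is fixed but must be stated carefully.
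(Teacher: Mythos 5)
Your proposal matches the paper's proof essentially step for step: the same case-by-case syntactic translation (atoms handled via polyadicity and $\varphi_\sim$ on the bisimulation quotient, quantifiers via Observation~\ref{obs:ind-quant} and Lemmas~\ref{lem:quantifier-sets} and~\ref{lem:quantifier-ho}, higher-order variables and applications via the type correspondence $\SEMTRANS$), followed by a structural induction in which the fixpoint case is discharged by induction over the Kleene approximation stages. The paper is in fact terser than you are on precisely the points you flag as obstacles (that $\SEMTRANS$ intertwines the two fixpoint operators and that the index layout fits within $\dimension = 2w+r+2$), so your proposal is, if anything, more explicit about what needs to be checked.
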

\begin{proof}
Let $\SYNTRANS$ be given via
\begin{align*}
\syntrans{p(X_i)} &\coloneqq p_i \\
\syntrans{a(X_i,X_j)} &\coloneqq \mudiam{a_i}{\subst{\sigma_{i,j}}{\varphi_\sim}} \\
\syntrans{\varphi_1 \vee \varphi_2} &\coloneqq \syntrans{\varphi_1} \vee \syntrans{\varphi_2} \\
\syntrans{\neg \varphi} &\coloneqq \neg \syntrans{\varphi} \\
\syntrans{\exists (X_i\colon\indtype).\ \varphi} &\coloneqq \exists_i.\ \syntrans{\varphi} \\
\syntrans{\exists(X\colon \tau'_{w,1}).\, \varphi} &\coloneqq \exists^{(w)}(x).\  \syntrans{\varphi} \\
\syntrans{\exists(X\colon \tau'_{w,k}).\ \varphi} &\coloneqq \exists^{w,k}(x).\ \syntrans{\varphi} \text{ if } k \geq 2 \\
\syntrans{X(X_{i_1},\dotsc,X_{i_w})} &\coloneqq \subst{\sigma}x \text { if } (X\colon\tau'_{w,1}) \\
\syntrans{X(X_1,\dotsc,X_w)} &\coloneqq x(x_1,\dotsc,x_w) \text{ if } (X \colon \tau'_{w,k}) \text{ and } k \geq 2 \\
\syntrans{\big(\lfp(X,Y_{1},\dotsc,Y_{w}).\ \varphi\big)(Z_{i_1},\dotsc,Z_{i_w})} &\coloneqq \subst{\sigma}{\mu (F\colon\grtype\to\grtype).\  \syntrans{\varphi}} \text{ if } (X\colon\tau_{w,2}) \\
\syntrans{\big(\lfp(X,Y_{1},\dotsc,Y_{w}).\ \varphi\big)(Z_{1},\dotsc,Z_{w})} &\coloneqq \big(\mu (F\colon\tau_{w,k-2}\to\tau_{w,k-2}).\ \lambda(y_1,\dotsc,y_w\colon\tau_{w,k-3}).\ \\& \qquad \qquad \qquad \quad \syntrans{\varphi}\big)(Z_1,\dotsc,Z_w)  \text{ if } (X\colon\tau_{w,k}) \text{ and } k \geq 3
\end{align*}
where $\sigma_{i,j}$ is defined via $\sigma_{i,j}(\dimension-1) = i, \sigma_{i,j}(\dimension) = j$ and $\sigma_{i,j}(k) = k$ 
for $k < \dimension-1$, and $\varphi_\sim$ is the formula from Ex.~\ref{ex:bisim}, resp.\ \cite{Otto99} that holds iff the states in positions
$\dimension-1$ and $\dimension$ are bisimilar, and, finally $\sigma$ is defined via $\sigma(j) = i_j$ for $0 \leq j \leq w$ and
$\sigma(k) = k$ for $k \geq w$. The substitions $\sigma_{i,j}$, resp.\ $\sigma$ serve to swap the elements $i$ and $j$ to positions
$\dimension-1$ and $\dimension$, resp.\ to reorder the elemtents according to the variable order used on the left
side of the translation.¸

The claim now follows by induction over the syntax tree of $\varphi$. 
 The first five cases are as in \cite{Otto99}. The next four cases
are by Lemmas~\ref{lem:quantifier-sets} and \ref{lem:quantifier-ho}, resp.\ the definition of $\eta_\alpha$.
The fixpoint cases are by an induction over the individual stages of the fixpoint iteration; note that
the Knaster-Tarski-based semantics of the fixpoints in \HOLFP and \PHFL can be equivalently replaced by 
semantics based on the Kleene Fixpoint Theorem \cite{Kleene:1938}. This mirrors the proof of the fixpoint
case in \cite{Otto99}.
\end{proof}

\begin{theorem}
\label{thm:capture}
$\phfl{k}{}$ captures \bisim{k-\ExpTime} for all $k \geq 0$.
\end{theorem}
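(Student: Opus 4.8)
The plan is to prove Theorem~\ref{thm:capture} by assembling the two inclusions that \emph{capturing} requires: first, every query definable in $\phfl{k}{}$ is a \bisim{$k$-\ExpTime}-query; second, every \bisim{$k$-\ExpTime}-query is definable in $\phfl{k}{}$. For $k=0$ both are exactly Otto's Theorem (Prop.~\ref{prop:otto}, recalling $0$-\ExpTime is \PTime), so I would prove the statement for all $k\geq 1$ by one uniform argument, noting that it re-derives Prop.~\ref{prop:capturebisimexptime} for $k=1$. The first inclusion is the easy half: a $\phfl{k}{}$-definable query has some fixed arity $d$ and is given by a formula $\varphi\in\phfl{k}{d}$; by Prop.~\ref{prop:bisiminv} the query it defines is bisimulation-invariant, and deciding it on an input $(\Transsys,\overline{\state})$ amounts to an instance of model checking $\phfl{k}{d}$, which lies in $k$-\ExpTime by Thm.~\ref{thm:upper}. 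Hence every $\phfl{k}{}$-definable query is in \bisim{$k$-\ExpTime}.

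For the converse, let $Q$ be a \bisim{$k$-\ExpTime}-query of width $r$. Since $Q$ is in $k$-\ExpTime and isomorphism-invariant, and since for $k\geq 1$ a linear order is second-order definable so that the ordered-structures proviso in Prop.~\ref{prop:IV-gen} is vacuous, there is an $\holfp{k+1}$ formula $\psi$ with free individual variables $X_1,\dotsc,X_r$ expressing $Q$ over all finite LTS. By Lemma~\ref{lem:homo} I may assume $\psi$ uses only the homogeneous types $\tau'_{w,0},\dotsc,\tau'_{w,k+1}$ for some $w\geq 2$, and, enlarging $w$ if necessary, that $2w\geq r$; set $\dimension:=2w+r+2$. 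Now I would pass to a convenient class of inputs: since $Q$ and every $\phfl{k}{}$-definable query are bisimulation-invariant (Prop.~\ref{prop:bisiminv}) and, by the standard disjoint-union locality argument, unaffected by restricting an LTS to the fragment reachable from the query tuple, it suffices to construct a \emph{closed} formula $\chi\in\phfl{k}{\dimension}$ expressing $Q$ (on positions $1,\dotsc,r$) over every finite LTS $\Transsys$ that is its own bisimulation quotient and all of whose states are reachable from the query tuple $\overline{\state}=(\state_1,\dotsc,\state_r)$; the general case then follows by first restricting to the reachable fragment and then quotienting, neither operation changing $\semm{\chi}{}$ or membership in $Q$.

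Fix such $\Transsys$ and $\overline{\state}$. On $\Transsys$ bisimilarity coincides with equality, the formula $\varphi_\sim$ of Ex.~\ref{ex:bisim} tests equality, and the $\phfl{0}{2}$ formula $\varphi_<$ of \cite{Otto99} defines a strict linear order; moreover, as every state is reachable from the query tuple, that tuple is a legitimate choice of reachability roots in the sense of Def.~\ref{def:good}. Applying Lemma~\ref{lem:translate} to $\psi$ produces $\varphi'\in\phfl{k}{\dimension}$ with a single free variable $e\colon\grtype$ such that $\Transsys,[X_i\mapsto\state_i]\models\psi$ iff $\Transsys,\overline{\state}\models_{[e\mapsto M]}\varphi'$, where $M=M'\times\States^{\dimension-2r-2}\times\{\state_1\}\times\dotsb\times\{\state_r\}\times\States^2$ is the intended good set and the rest of the environment is empty since $\psi$ has no other free variables. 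It remains to close the formula off by instantiating $e$. I would use the individual-state quantifier $\exists_i$ of Obs.~\ref{obs:ind-quant} --- sound here exactly because $\Transsys$ is reachable from positions $1,\dotsc,r$ of the tuple --- to existentially bind the auxiliary positions $r+1,\dotsc,\dimension$, conjoining $\varphi_\sim$-tests that force copies of $\state_1,\dotsc,\state_r$ into the reserved positions $\dimension-r-1,\dotsc,\dimension-2$, and within that scope substitute for $e$ (by $\beta$-reduction) an explicit $\phfl{0}{\dimension}$ formula $\psi_e$ defining $M$: $\psi_e$ pins the reserved positions via $\varphi_\sim$ under suitable index substitutions, asserts by a reachability least fixpoint over the transition modalities that every state is reachable from them, and leaves the $M'$-block and the last two positions unconstrained. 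Because $\chi$ existentially binds positions $r+1,\dotsc,\dimension$, its value at a query tuple does not depend on how those positions are padded, so the resulting closed $\chi\in\phfl{k}{\dimension}$ expresses the width-$r$ query $Q$ on the nominated class by Lemma~\ref{lem:translate} together with Lemmas~\ref{lem:quantifier-sets} and \ref{lem:quantifier-ho}, which completes the converse inclusion and hence the theorem.

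The real difficulty sits in the ingredients of this assembly rather than in the assembly itself: the enumeration-based emulation of higher-order quantification underlying Lemmas~\ref{lem:quantifier-sets} and \ref{lem:quantifier-ho}, the type-homogenisation of Lemma~\ref{lem:homo}, and the inductive translation of Lemma~\ref{lem:translate}. Within the proof of Theorem~\ref{thm:capture} proper, the one delicate point I anticipate is the elimination of the free variable $e$: it must be pinned to a \emph{good} value without re-invoking the very quantifier patterns that presuppose $e$, which is precisely why the prior reduction to reachable bisimulation quotients --- so that the query tuple itself can serve as the reachability roots and $\varphi_\sim$ can pin the reserved positions --- is essential rather than merely cosmetic.
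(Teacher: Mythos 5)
Your proposal assembles the theorem from exactly the same ingredients and in the same order as the paper: the upper bound from Prop.~\ref{prop:bisiminv} and Thm.~\ref{thm:upper}, and the lower bound via Prop.~\ref{prop:IV-gen}, type homogenisation (Lemma~\ref{lem:homo}), and the inductive translation (Lemma~\ref{lem:translate}), followed by closing off the free variable $e$; this is essentially the paper's proof. The one place you diverge is the mechanism for instantiating $e$, and there your construction has a circularity: you propose to fill the reserved positions $\dimension-r-1,\dotsc,\dimension-2$ using the quantifier $\exists_i$ of Obs.~\ref{obs:ind-quant} and then argue padding-independence, but the soundness of $\exists_i$ presupposes that those very positions already hold states from which the whole LTS is reachable, so on an arbitrarily padded input tuple the quantifier need not range over all states and the $\varphi_\sim$-pinning need not find copies of $\state_1,\dotsc,\state_r$. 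The paper avoids this by dispensing with quantifiers altogether at this step: it forms $\psi = (\lambda(e\colon\grtype).\,\varphi')\,\subst{\sigma}{\mytrue}$ where $\sigma$ deterministically copies positions $1,\dotsc,r$ into the reserved positions, so that $e$ is bound to a good set (Def.~\ref{def:good}) by construction; substituting this for your $\exists_i$-based binding repairs the step and the rest of your argument goes through.
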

\begin{proof}
The cases of $k=0$ and $k=1$ are of course already known \cite{Otto99,conf/ifipTCS/LangeL14}. For $k \ge 2$, the upper 
bound is shown in Thm.~\ref{thm:upper}. For the lower bound, it suffices, due to Prop.~\ref{prop:IV-gen}, 
to show that for any bisimulation-invariant query defined by an \holfp{k+1} formula there is an equivalent 
formula in $\phfl{k}{}$. Let $\varphi$ be a formula 
defining such a query $Q$ of width $r \geq 1$, and w.l.o.g.\ $\varphi$ contains only the types $\tau_{w,0},\dotsc,\tau_{w,k}$.

Let $\varphi'$ be the formula obtained from the translation in Lemma~\ref{lem:translate} and let $\sigma$
be defined via $\sigma(\dimension-r-2+i) = i$ for $1 \leq i \leq r$, and $\sigma(j) = j$ for $j \leq
\dimension - r -2$ or $j \geq \dimension-1$. Note that $\sigma$ simply copies the first $r$ many
states in a tuple to the positions $\dimension-r-1,\dotsc,\dimension-2$ where they serve to retain goodness (cf.\ Def.~\ref{def:good}).
Then
$\psi = (\lambda (e\colon\grtype)\varphi)\subst{\sigma}{\mytrue}$ defines the 
query $\{(\Transsys, (\state_1,\dotsc,\state_\dimension)) \mid (\Transsys, (\state_1,\dotsc,\state_r)) \in Q\}$.
Towards this, note that $\sem{\subst{\sigma}{\mytrue}}{\Transsys}{} = \{(\state_1,\dotsc,\state_\dimension) \mid
\state_i = \state_{\dimension-r-2+i} \text{ for } 1 \leq i \leq r\} = M$.
Hence, $\sem{\psi}{\Transsys}{} = \sem{\varphi'}{\Transsys}{\eta}$ where $\eta = \emptyset[e\mapsto M]$.
Since $\eta$ is good, Lemma~\ref{lem:translate} is applicable.

What remains is to argue that a query that, on each LTS $\Transsys$, returns all tuples in $M \times \States^{\dimension-r}$
where $M \subseteq \States^r$, is in fact a query that defines tuples of width $r$. This can either be done formally
by expanding the semantics of the substitution operator $\subst{\sigma}{}$ to return tuples in the width of its co-domain,
or by simply considering queries of the above form to be of width $r$, cf.~\cite{Otto99}.
\end{proof}

\section{Conclusion}
\label{sec:concl}

We have extended the descriptive complexity of bisimulation-invariant queries, as started by
Otto \cite{Otto99} with the capturing of the class \bisim{\PTime} with the logic \polymucalc, to further
time complexity classes, namely the classes \bisim{$k$-\ExpTime} of bisimulation-invariant queries
that can be answered in $k$-fold exponential time. These turn out to be exactly those that can be
defined in $\phfl{}{}$, the polyadic extension of the higher-order modal fixpoint logic \HFL. This
is genuinely an extension as \polymucalc coincides with $\phfl{0}{}$, the fragment with no higher-order
constructs. Moreover, the level in the exponential-time hierarchy corresponds to the type order
used in formulas: \bisim{$k$-\ExpTime} = $\phfl{k}{}$. 

\begin{figure}
\begin{center}
\begin{tikzpicture}[semithick]
  \path node (ptime)    {\PTime}       -- +(-25:20mm) node (bisptime)    {\bisim{\PTime}}       -- ++(0,15mm) 
        node (exptime)  {\ExpTime}     -- +(-25:20mm) node (bisexptime)  {\bisim{\ExpTime}}     -- ++(0,15mm)
        node (2exptime) {$2$-\ExpTime} -- +(-25:20mm) node (bis2exptime) {\bisim{$2$-\ExpTime}} -- ++(0,15mm)
        node (3exptime) {$3$-\ExpTime} -- +(-25:20mm) node (bis3exptime) {\bisim{$3$-\ExpTime}} -- ++(0,10mm)
        node            {$\vdots$}     -- +(-25:20mm) node               {$\vdots$};
  
  \path[-] (ptime) edge (bisptime) edge (exptime)
           (bisptime) edge (bisexptime)
           (exptime) edge (bisexptime) edge (2exptime)
           (bisexptime) edge (bis2exptime)
           (2exptime) edge (bis2exptime) edge (3exptime)
           (bis2exptime) edge (bis3exptime)
           (bis3exptime) edge (3exptime);
           
  \path ++(8,0)
        node (unknown) {}            -- +(-25:20mm) node (phfl0) {$\phfl{0}{} = \polymucalc$} -- ++(0,15mm) 
        node (solfp)   {$\holfp{2}$} -- +(-25:20mm) node (phfl1) {$\phfl{1}{}$} -- ++(0,15mm)
        node (3holfp)  {$\holfp{3}$} -- +(-25:20mm) node (phfl2) {$\phfl{2}{}$} -- ++(0,15mm)
        node (4holfp)  {$\holfp{4}$} -- +(-25:20mm) node (phfl3) {$\phfl{3}{}$} -- ++(0,10mm)
        node           {$\vdots$}    -- +(-25:20mm) node         {$\vdots$};
           
  \path[-] (phfl0) edge (phfl1) 
           (phfl1) edge (phfl2) edge (solfp)
           (solfp) edge (3holfp)
           (phfl2) edge (phfl3) edge (3holfp)
           (3holfp) edge (4holfp)
           (phfl3) edge (4holfp);

  \path[-] (bisptime)    edge node [pos=.45,above] {\scriptsize \cite{Otto99}} (phfl0)
           (bisexptime)  edge node [pos=.35,above] {\scriptsize \cite{conf/ifipTCS/LangeL14}} (phfl1)
           (bis2exptime) edge node [pos=.35,above] {\scriptsize Thm.~\ref{thm:capture}} (phfl2) 
           (bis3exptime) edge node [pos=.35,above] {\scriptsize Thm.~\ref{thm:capture}} (phfl3)
           (exptime)     edge node [pos=.45,below] {\scriptsize \cite{Imm:lanccc}} (solfp)
           (2exptime)    edge node [pos=.45,below] {\scriptsize \cite{FREIRE201171}} (3holfp) 
           (3exptime)    edge node [pos=.45,below] {\scriptsize \cite{FREIRE201171}} (4holfp)
           ; 
           
  \path +(.9,-2) node {Computational Complexity} +(8.9,-2) node {Descriptive Complexity};
\end{tikzpicture}
\end{center}
\caption{Capturing results for time complexity classes - an overview.}
\label{fig:overview}
\end{figure}
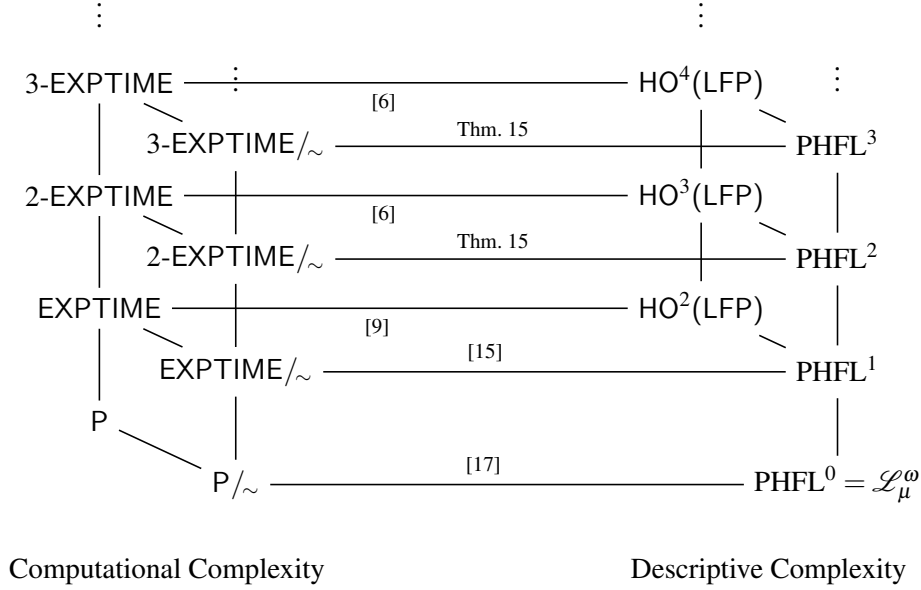

The resulting picture of descriptive time complexity is shown in Fig.~\ref{fig:overview}. A natural
way to extend this is of course to integrate results about the space complexity classes 
\bisim{$k$-\ExpSpace} sitting in between \bisim{$k$-\ExpTime} and \bisim{$(k+1)$-\ExpTime}. Here we
can only state that it is possible to capture these in terms of natural fragments of \PHFL as well
but the proof requires a few more ingredients than the time-complexity case. Here we could use the already
known characterisation of $k$-\ExpTime by Higher-Order Predicate Logics with Least Fixpoints
\cite{FREIRE201171}. For the space complexity classes, we need to first develop such characterisations that
extend the Abiteboul-Vianu Theorem. This can be done \cite{KronenbergerMSc19}, but due to space 
restrictions here its detailed presentation is left for a future publication.

A noteworthy observation for the capturing of \bisim{$k$-\ExpTime} with $k \geq 1$ is that, unlike 
in the case of $k=0$, the requirement of structures being ordered is not necessary for the invocation of the 
(generalised) Immerman-Vardi Theorem. However, starting from $k\geq 1$ the order being provided in the bisimulation-invariant
setting plays a crucial role to emulate existential quantification from the $\HOLFP$ side: the LTS in question 
coming with a $\phfl{0}{2}$-definable order allows us to enumerate sufficiently many sets, functions, etc.\ to
emulate existential quantification. It remains to see in further detail  whether this is an artifact of the proof strategy or a genuine
and inherent part of the correspondence between complexity classes and logics in the bisimulation-invariant framework.

\newpage
\bibliographystyle{eptcs}
\bibliography{literature}



\end{document}